\journal{Computer Communications}
\newcommand{\SNR}{\ensuremath{\hbox{SNR}}}
\newcounter{theorem}
\newtheorem{theorem}{Theorem}
\newtheorem{lemma}[theorem]{Lemma}
\newtheorem{claim}[theorem]{Claim}
\newcounter{definition}
\newtheorem{definition}{Definition}
\newcommand{\beq}{\begin{equation}}
\newcommand{\eeq}{\end{equation}}
\newcommand{\bea}{\begin{array}}
\newcommand{\ena}{\end{array}}
\newcommand{\bds}{\begin {itemize}}
\newcommand{\eds}{\end {itemize}}
\newcommand{\bdf}{\begin{definition}}
\newcommand{\blm}{\begin{lemma}}
\newcommand{\edf}{\end{definition}}
\newcommand{\elm}{\end{lemma}}
\newcommand{\bthm}{\begin{theorem}}
\newcommand{\ethm}{\end{theorem}}
\newcommand{\bprp}{\begin{prop}}
\newcommand{\eprp}{\end{prop}}
\newcommand{\bcl}{\begin{claim}}
\newcommand{\ecl}{\end{claim}}
\newcommand{\bcr}{\begin{coro}}
\newcommand{\ecr}{\end{coro}}
\newcommand{\bquest}{\begin{question}}
\newcommand{\equest}{\end{question}}
\newcommand{\larrow}{{\larrow}}
\newenvironment{proof}{\textit{Proof:}}{\hfill$\square$}
\begin{document}

\begin{frontmatter}

\title{Topology management and outage optimization for multicasting over slowly fading multiple access networks\tnoteref{Ack_and_ICCCN}}
\tnotetext[Ack_and_ICCCN]{This research was supported by the Israel Ministry of Labor, Trade and Commerce, as part of the RESCUE consortium. Part of this research was presented in ICCCN 2013 \cite{Zanko_Network_2013Submeeted}}

\author{Avi Zanko\corref{cor1}\fnref{Univaddress}}
\ead{avizanz@gmail.com}
\fntext[Universityaddress]{School of Engineering, Bar-Ilan University, Ramat-Gan, 52900,Israel}
\cortext[cor1]{Corresponding author}

\author{Amir Leshem\fnref{Universityaddress}}
\ead{leshem.amir2@gmail.com}

\author{Ephraim Zehavi\fnref{Universityaddress}}
\ead{ephiz@yahoo.com}

\begin{abstract}
This paper examines the problem of rate allocation for multicasting over slow Rayleigh fading channels using network coding. In the proposed model, the network is treated as a collection of Rayleigh fading multiple access channels. In this model, rate allocation scheme that is based solely on the statistics of the channels is presented. The rate allocation scheme is aimed at minimizing the outage probability. An upper bound is presented for the probability of outage in the fading multiple access channel. A suboptimal solution based on this bound is given. A distributed primal-dual gradient algorithm is derived to solve the rate allocation problem.
\end{abstract}

\begin{keyword}
Network coding for multicasting \sep wireless networks \sep outage capacity \sep Rayleigh fading \sep multiple access channels


\end{keyword}

\end{frontmatter}


\section{Introduction}
\label{sec:introduction}
Network coding extends the functionality of intermediate nodes from storing/forwarding packets to performing algebraic operations on received data. If network coding is permitted, the multicast capacity of a network with a single source has been shown to be equal to the minimal min-cut between the source and each of its destinations \cite{Ahlswede_Network_2000}. In the past decade, the concept of combining data by network coding has been extensively extended by e.g. \cite{Li_Linear_2003,Jaggi_Low_2003,Barbero_Heuristic_2006}
and it is well known that in order to achieve the multicast rate, a linear combination over a finite field suffices if the field size is larger than the number of destinations. Moreover, centralized linear network coding can be designed in polynomial time \cite{Jaggi_Polynomial_2005}. Decentralized linear network coding can be implemented using a random code approach \cite{Ho_A_random_2006}. A comprehensive survey of network coding can be found in e.g., \cite{Fragouli_Network_2007,Ho_Network_2008}.

Many network resource allocation problems can be formulated as a constrained maximization of a certain utility function. The problem of network utility maximization  has been explored extensively in the past few decades \cite{Palomar_A_Tutorial_2006,Chiang_Layering_2007}. We briefly introduce related work on topology management and rate allocation for network coding in multicast over wireless networks. The problem of finding a minimum-cost scheme (while maintaining a certain multicast rate) in coded networks was studied by Lun et al. \cite{Lun_Network_2004,Lun_Minimum_2006}. They showed that there is no loss of optimality when the problem is decoupled into: finding the optimal coding rate allocation vector (also known as subgraph selection) and designing the code that is applied over the optimal subgraph. Moreover, in many cases, optimal subgraphs can be found in polynomial time. If in addition the cost function is also convex and separable, the solution can be found in a decentralized manner, where message passing is required solely between directly connected nodes. This decentralized solution, if coupled with random network coding (e.g. \cite{Lun_On_2008,Chou_Practical_2003}) provides a fully distributed scheme for multicast in coded wireline networks. This has prompted many researchers to develop different algorithms that find minimum-cost rate allocation solutions distributively; e.g. \cite{Cui_Optimal_2004,Bhadra_Min_2006,Wu_Distributed_2006, Xi_Distributed_2010}.

When addressing the problem of rate allocation for multicast with network coding in wireless networks, Lun et al., \cite{Lun_Minimum_2006,Lun_Achieving_2005} tackled the problem through the so-called \textit{wireless multicast advantage} phenomenon. This phenomenon simply comes down to the fact that when interference is avoided in the network (e.g., by avoiding simultaneous transmissions), communication between any two nodes is overheard by their nearby nodes due to the broadcast nature of the wireless medium. In \cite{Lun_Achieving_2005}, the wireless multicast advantage was used to reduce the transmission energy of the multicast scheme (since when two nodes communicate, some of their nearby nodes get the packet for "free"). Therefore, their wireline minimum-cost optimization problem was updated accordingly \citep[see][eq.(1) and (40)]{Lun_Achieving_2005}. In \cite{Xi_Distributed_2010} interference is allowed but is assumed to be limited. Joint optimal power control, network coding and congestion control is presented for the case of very high SINR (signal to noise plus interference ratio). This interference assumption implies that there are some limitations on simultaneous transmissions and this is taken into account in the optimization problem. In \cite{Yuan_A_Cross_2006} the problem of joint power control, network coding and rate allocation was studied. They showed that the throughput maximization problem can be decomposed into two parts: subgraph selection at the network layer and power control at the physical layer. A primal dual algorithm was given that converges to the optimal solution provided that the capacity region is convex with respect to the power control variables (i.e., when interference are ignored). On the other hand, to take interference into account a game theoretic method was derived to approximately characterize the capacity region.

In wireless networks, it is reasonable to assume that there is no simultaneous packet transmission or reception by any transceiver. These properties of the wireless medium introduced a new cross-layer interactions that may not exist in the wired network. Sagduyu et al. \cite{Sagduyu_On_Joint_2007} analyzed and designed wireless network codes in conjunction with conflict-free transmission schedules in wireless ad hoc networks. They studied the cross-layer design possibilities of joint medium access control and network coding. It was shown that when certain objectives such as throughput or delay efficiency are considered, then network codes must be jointly designed with medium access control. The joint design of medium access control and network coding \cite{Sagduyu_On_Joint_2007} was formulated as a nonlinear optimization problem. In \cite{Niati_Throughput_2012} the work reported in \cite{Sagduyu_On_Joint_2007} was extended and a linear formulation was derived.

However, there are certain other considerations that must be taken into account in the search for a rate allocation vector in wireless networks. The wireless medium varies over time and suffers from fading channels due to multipath or shadowing, for example. In \cite{Ozarow_Information_1994} the block fading model was introduced. In this model the channel gain is assumed to be constant over each coherence time interval. Typically, fading models are classified as fast fading or slow fading. In fast fading, the coherence time of the channel is small relative to a code block length and as a consequence the channel is ergodic with a well-defined Shannon capacity (also known as the ergodic capacity \cite{Goldsmith_Capacity_1997}). In slow fading, the code block length and the coherence time of the channel are of the same order. Hence, the channel is not ergodic and the Shannon capacity is not usually a good measure of performance. The notion of outage capacity was introduced in \cite{Ozarow_Information_1994} for transmitting over fading channels when the channel gain is available only at the receiver. In this approach, transmission takes place at a certain rate and tolerates some information loss when an outage event occurs. An outage event occurs whenever the transmitted rate is not supported by the instantaneous channel gain; i.e., when the channel gain is too low for successful decoding of the transmitted message. It is assumed that outage events occur with low probability that reliable communication is available most of the time. A different strategy to deal with slow fading is the broadcast channel approach \cite{Shamai_A_broadcast_1997}. In this approach different states of the channel are treated as channels toward different receivers (a receiver for each state). Hence, the same strategy as used for sending common and private messages to different users on the Gaussian broadcast channel can be applied here. When the channel gain is also available at the encoder, the encoder can adapt the power and the transmission rate as a function of the instantaneous state of the channel and thus can achieve a higher rate on average. Moreover, as regards the outage capacity, the transmitter can use power control to conserve power by not transmitting at all during designated outage periods.

When dealing with outage capacity for fading MAC, the common outage has a similar definition to the outage event in the point to point case. A common outage event is declared whenever we transmit with rates that are not supported by the instantaneous channel gains. If the channel gains are available at both the decoder and the encoders, additional notions of capacities for the fading MAC need to be taken into account. The throughput capacity region for the Gaussian fading MAC was introduced in \cite{Tse_Multiaccess_1998}. In a nutshell, this is the Shannon capacity region where the codewords can be chosen as a function of the realization of the fading with arbitrarily long coding delays. However, as for the point to point case, this approach is not realistic in slow fading cases since it requires a very long delay to average out the fading effect. \cite{Hanly_Multiaccess_1998} derived the delay limited capacity for the Gaussian fading MAC (also known as the zero outage capacity). In the delay limited capacity, unlike the throughput capacity, the chosen coding delay has to work uniformly for all fading processes with a given stationary distribution. However, the delay limited capacity is somewhat pessimistic due to the demand to maintain a constant rate under any fading condition. The outage capacity region and the optimal power allocation for a fading MAC were described in \cite{Li_Outage_2005}. As was pointed out in \cite{Li_Outage_2005}, in a slow fading environment, the decoding delay depends solely on the code-length employed and not on the time variation of the channel.

The demand for interference free channels at all nodes means that some level of orthogonality is required between different transmissions in the network. Avoiding interference between all nodes comes at the cost of loss of expensive
bandwidth, or alternatively leads to rate degradation in band
limited systems. The same argument can be applied to the limited
interference model since some orthogonality at a certain
radius is required. In \cite{Zanko_Network_2013Submeeted}, the MAC network
coding model was introduced. In the MAC network model, in contrast to the
wireless broadcast advantage based models, the superposition
property of the wireless medium is exploited. The network is
treated as a collection of multi access channels, such that each
receiver simultaneously receives data from all its in-neighbors.

\textbf{Main contributions:} This paper explores the problem of rate allocation for multicasting over slow Rayleigh fading channels using network coding. The problem is examined in a model where the network is treated as a collection of Rayleigh fading multi access channels. In our network model, we assume that links on the network vary faster than the entire network can respond to the variations. Therefore, our goal is to find a rate allocation scheme that is based solely on the statistics of the channels which minimizes the outage probability. This paper differs from prior works at two major aspects. Prior works' models assume long time averaging of the instantaneous capacity (as in the ergodic capacity approach) or averaging of the packet arrival rate (see e.g., \cite{Ho_Network_2008}). These assumptions are more suitable for fast fading model while in slow fading model this is unrealistic. Hence, in this paper we design a different rate allocation scheme which is more suitable to the slow fading model. Moreover, in this paper the design of the rate allocation scheme is based solely on the statistics which is desirable in many practical large scale networks, as will be emphasized in section \ref{sec:Rate_allocation_for_the_outage_MAC_model}.

The communication model is described in detail in section \ref{sec:Communication_model}.

In section \ref{sec:outage_probability_bounds} we present lower and upper bounds for the outage probability of a fading MAC. In section \ref{sec:Rate_allocation_for_the_outage_MAC_model} a suboptimal solution for the rate allocation problem is presented for the MAC network model. The solution is based on an upper bound on the probability of outage in the fading MAC. In section \ref{sec:distributed} a distributed solution is derived for the rate allocation problem in the MAC network model. In section \ref{sec:simulation} we report some simulation results. We end with concluding remarks.

\section{Communication model}
\label{sec:Communication_model}
Let $\mathcal{G}=\left({\mathcal{V}},{\mathcal{E}}\right)$ be a directed graph with a set of nodes $\mathcal{V}$ and directed edges $\mathcal{E}\subset\bf{V}\times\bf{V}$, where transceivers are nodes and channels are edges representing a wireless communication network. In this paper, scalars and random variables are denoted by lower case letters. Vectors and matrices are denoted by boldface lower and upper case letters, respectively. We are abusing of notation a bit by using the same letters to refer to random variables and their realizations. The cardinality of any set $\mathcal{A}$ is denoted by $|\mathcal{A}|$. All vectors are columns and inequalities between vectors are defined element-wise; i.e., ${\bf v}\leq {\bf u}$ implies $v_i\leq u_i$ for all $i$.
For any node $j\in\mathcal{V}$, we denote the in-neighborhood and out-neighborhood of $j$ by $\mathcal{I}(j)$ and $\mathcal{O}(j)$ respectively, i.e.,
$\mathcal{I}(j)=\{i:\left(i,j\right)\in\mathcal{E}\}$ and $\mathcal{O}(j)=\{i:\left(j,i\right)\in\mathcal{E}\}$. The network is treated as a collection of multi access channels, such that each receiver simultaneously receives data from all its in-neighbors. For simplicity, it is assumed that there is no interference between transmissions toward different receivers (see Fig. \ref{fig:MAC_network_model}(a)).
\begin{figure}[htbp]
	\centering
\includegraphics[width=\textwidth]{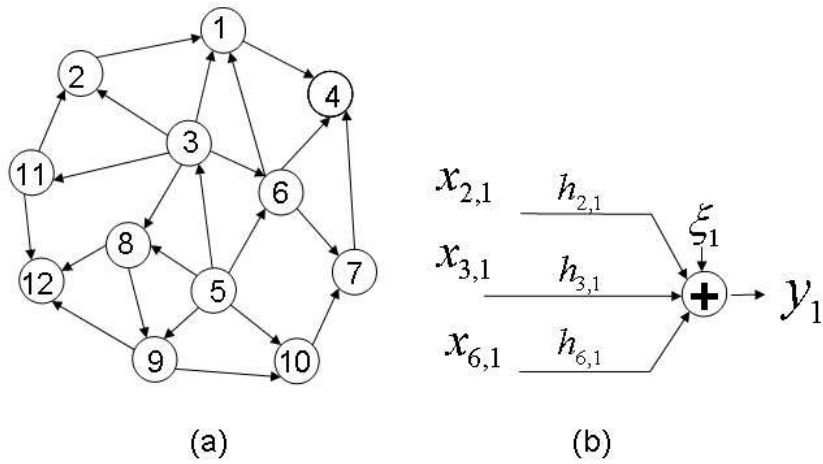}
	\caption{The MAC network model: (a) An illustration of a wireless network with $12$ transceivers positioned as a directed graph $\mathcal{G}$. In the MAC network model each receiver receives data from all its in-neighborhood nodes. For example the nodes in $\mathcal{I}(1)=\{2,3,6\}$ transmit toward node $1$ and the nodes in $\mathcal{I}(4)=\{1,6,7\}$ transmit toward $4$. However, it is assumed that (for example), there is no interference between the transmissions toward node $1$ and the transmissions toward node $4$. (b) The MAC of node $1$.}
	\label{fig:MAC_network_model}
\end{figure}
This can be achieved by orthogonal transmissions e.g., by using a certain frequency reuse pattern or directional antennas. Clearly, this is an improvement over a model where all transmissions are orthogonal. If we consider the MAC network model with deterministic channel gains, a joint power control and rate allocation solution for a (convex) network utility can be found distributively \cite{Yuan_A_Cross_2006}. This is due to the convexity of the capacity region of the multi access channels \cite{El_Gamal_Network_2011}. The deterministic model can be adapted to deal with fast fading channels in the case of a constant power allocation vector by using the ergodic MAC capacity region instead of the MAC capacity region. This ergodic capacity region is easily obtained by taking the expectations of the capacity constraints \cite{El_Gamal_Network_2011}. Here, we examine the MAC network model in the case of slow fading channels. We aim to find a rate allocation scheme that is based solely on the statistics of the channels which minimizes the outage probability.

The channel gain of link $(i,j)$ is denoted by $h_{i,j}$. $h_{i,j}$ is a zero mean circular complex normal random variable with variance of $\upsilon_{i,j}^2$. It is assumed that all $h_{i,j}$ are independent of each other. Denote by ${\bf h}_j:=\left[h_{i,j}:i\in\mathcal{I}(j)\right]$ and by $\boldsymbol{\eta_j}:=\left[|h_{i,j}|^2:i\in\mathcal{I}(j)\right]$. The transmission on link $(i,j)$ is denoted by $x_{i,j}$ and it is transmitted with an average power $p_{i,j}$. We assume that $\sigma^2_j$ is the variance of $\xi_j$ - the zero mean Gaussian noise at node $j$. Hence, the received signal at node $j$ is given by:
\begin{equation}
y_{j}=\displaystyle\sum_{i\in\mathcal{I}(j)}{h_{i,j}x_{i,j}}+\xi_j.
\end{equation}
Fig. \ref{fig:MAC_network_model}(b) illustrates the MAC of node $1$ in the network of Fig. \ref{fig:MAC_network_model}(a). The rate transmitted on a link $(i,j)$ is denoted by $r_{i,j}$, the rate allocation vector is denoted by ${\bf r}=\left[r_{i,j}:\;(i,j)\in\mathcal{E}\right]$ and the local rate allocation vector is denoted by ${\bf r}_j=\left[r_{i,j}:\;i\in\mathcal{I}(j)\right]$. When the instantaneous channel gains $h_{i,j}$ are deterministic and known, this is the well-known Gaussian multi access channel \cite{El_Gamal_Network_2011}. Hence, the instantaneous MAC capacity region is given by:
\begin{equation}
\label{ineq:MAC_capacity_Region}
\mathcal{V}^{\rm{ins}}_j({\bf h}_j):=\left\{r_{i,j}:\begin{array}{l}
\displaystyle\sum_{i\in\mathcal{M}(j)}{r_{i,j}}
\leq\log_2\left(1+\frac{P_{\mathcal{M}(j),j}}{\sigma^2_j}\right)\\
\qquad\forall\mathcal{M}(j)\subseteq\mathcal{I}(j)
\end{array}
\right\},
\end{equation}
where $P_{\mathcal{M}(j),j}=\displaystyle\sum_{i\in\mathcal{M}(j)}{p_{i,j}|h_{i,j}|^2}$.
However, when dealing with Rayleigh channels, this capacity region may not be a good measure of performance and the outage capacity is a better and more practical alternative. A common outage event is jointly declared for all links whenever we transmit toward a certain node with rates that are not supported by the instantaneous MAC capacity region.
\begin{definition}
\label{def:MAC_outage}
For a rate vector ${\bf r}_j$ and for the MAC associated with node $j$, the common outage event is
\begin{equation}
\label{eq:outage_event_def}
{\bf r}_j\notin\mathcal{V}^{\rm{ins}}_j({\bf h}_j),
\end{equation}
where the (random) capacity region $\mathcal{V}^{\rm{ins}}_j({\bf h}_j)$ is defined in (\ref{ineq:MAC_capacity_Region}).
\end{definition}
\begin{definition}
\label{def:MAC_outage_probability}
The probability of outage in the fading MAC of node $j$ is given by
\begin{equation}
P_{j}^{\rm{out}}=\Pr\left({\bf r}_j\notin\mathcal{V}^{\rm{ins}}_j({\bf h}_j)\right).
\end{equation}
\end{definition}

Similar to these definitions, we define an outage event and outage probability for
the MAC network model.
\begin{definition}
The outage event for the MAC network model is the event for which there exists node $j\in{\mathcal{V}}$ such that ${\bf r}_j\notin\mathcal{V}^{\rm{ins}}_j({\bf h}_j)$.
\end{definition}

Hence, the probability of outage in the MAC network model is given by
\begin{equation}
\label{eq:probability_outage_MAC_network_model}
P_{\rm{MAC}}^{\rm{out}}=\Pr\left(\displaystyle\bigcup_{j\in{\mathcal{V}}}{\left\{{\bf r}_j\notin\mathcal{V}^{\rm{ins}}_j({\bf h}_j)\right\}}\right).
\end{equation}

To complete the description of the local communication model, we associate the codebooks, the encoders $\left(\mathcal{F}_{i,j}\;:\;i\in\mathcal{I}(j)\right)$ and the decoder $g_j$ that establish the connection between $\mathcal{I}(j)$ and $j$ at rates $\left(r_{i,j}: i\in\mathcal{I}(j) \right)$ to any node in the network. Obviously, node $j$ shares the appropriate codebooks and encoders with its in-neighbors.

The source node is denoted by $s\in{\bf V}$ and it is assumed that $\mathcal{I}(s)=\phi$. The set of all destinations (sinks) is denoted by $\mathcal{D}_s\subseteq\bf{V}\backslash\{s\}$. Intermediate nodes are allowed to send out packets that are a combination of their received information and as a result they break the flow conservation by increasing/decreasing the outside rate. However, the main theorem of network coding for multicast is stated in terms of the max-flow (min-cut) between each source and its destinations. Therefore, we distinguish between the flow at an edge $(i,j)$ and the actual rate at that link. Let $f_{i,j}^{d}$ be the flow at edge $(i,j)$ destined for destination $d\in\mathcal{D}_s$, and let $r_{i,j}$ be the actual at edge $(i,j)$. The communication parameters are summarized in Table \ref{tab:communication_model_parameters}.

\begin{table}[htbp]
	\centering
		\caption{A summary of communication model parameters}
		\begin{tabular}{lll}
$x_{i,j}$ & Transmission on link $(i,j)$ &\\
$h_{i,j}$ & Channel gain of link $(i,j)$ & $h_{i,j}\sim\mathcal{CN}(0,\upsilon_{i,j}^2)$\\
${\bf h}_j$ & \multicolumn{2}{l}{${\bf h}_j=\left[h_{i,j}:i\in\mathcal{I}(j)\right]$}\\
${\boldsymbol{\eta}}_j$ & \multicolumn{2}{l}{${\boldsymbol{\eta}}_j=\left[|h_{i,j}|^2:i\in\mathcal{I}(j)\right]$}\\
$p_{i,j}$ & \multicolumn{2}{l}{Average power of the transmission on link $(i,j)$}\\
$\xi_j$     & Noise at node $j$ & $\xi_j\sim\mathcal{CN}(0,\sigma^2_j)$\\
$y_j$	  & Received signal at node $j$ \\
$r_{i,j}$	&	Rate at link $(i,j)$ 	\\
${\bf r}_j$	& 	Local rate allocation vector 	& ${\bf r}_j=\left[r_{i,j}:\;i\in\mathcal{I}(j)\right]$\\
${\bf r}$& Rate allocation vector &	${\bf r}=\left[r_{i,j}:\;(i,j)\in\mathcal{E}\right]$\\
$\mathcal{V}^{\rm{ins}}_j({\bf H})$ &  \multicolumn{2}{l}{Instantaneous MAC capacity region of node $j$} \\
$P_{j}^{\rm{out}}$  & \multicolumn{2}{l}{Probability of outage in the fading MAC of node $j$}\\
$P_{\rm{MAC}}^{\rm{out}}$ & \multicolumn{2}{l}{Probability of outage in the MAC network model} \\
$s$ & Source node &\\
$\mathcal{D}_s$ & Set of all destinations & $\mathcal{D}_s\subseteq\bf{V}\backslash\{s\}$\\
$f^d_{i,j}$ & \multicolumn{2}{l}{Flow at edge $(i,j)$ destined for destination $d\in\mathcal{D}_s$}
\end{tabular}
\label{tab:communication_model_parameters}
\end{table}

As was mentioned in section \ref{sec:introduction}, there is no loss of optimality by first finding the optimal rate allocation solution and then designing the coding scheme that realizes the connection. In the following section the rate allocation vector for the MAC network model is given as the solution to an optimization problem and the coding scheme that realizes the connection is assumed to be given. For large scale networks, where global network information is not available, the random network coding shown in \cite{Chou_Practical_2003,Lun_On_2008} can be employed. In general, in random network coding, intermediate nodes store all their received packets in their memory and when a packet injection occurs on an outgoing link, the node forms a packet that is a random linear combination of the packets in its memory. In order to enable decoding at the destinations, the random coefficients of the linear combinations are included in the header of the packet as side information. These coefficients are called the global encoding vector of the packet. Decoding is possible if all destinations collect enough packets with linearly-independent global encoding vectors. The algorithm shown in \cite{Lun_On_2008} for random packet level network coding was adjusted to the MAC network model in \cite{Zanko_Network_2013Submeeted}.

\section{Bounds on the probability of outage of a MAC}
\label{sec:outage_probability_bounds}
In this section we bound the outage probability of a fading MAC. To do so, we need the following notations and definitions. Consider a (slow) fading MAC with $n$ links each of which is a Rayleigh channel i.e., $h_i\sim\mathcal{CN}(0,\upsilon_i^2)\;\;i=1,2,\cdots,n$. Denote the variance of the zero mean Gaussian noise at the receiver by $\sigma^2$. For any matrix ${\bf B}$, $b_{i,j}$ denotes the entry in the $i$'th row and $j$'th column of ${\bf B}$. Let ${\bf B}^{r*}$ be a submatrix of ${\bf B}$ constructed by deleting the $r$'th row of ${\bf B}$. For $r=1$ we denote ${\bf B}^{1*}={\bf B}^{*}$. For any $n\geq 1$ let ${\bf 1}_n$,${\bf 0}_n$ be vectors with length $n$ of ones and zeros, respectively. For any $n\geq 1$, let ${\bf A}_n$ be a $(2^n-1)\times n$ matrix, such that for $n=1$ ${\bf A}_1=1$ and for $n\geq 2$
\begin{equation}
\label{eq:A_n:recursion}
{\bf A}_{n+1}=\left[\begin{array}{lll}
{\bf 0}_{2^n-1} &,& {\bf A}_{n}\\
1 &,& {\bf 0}_n^T \\
{\bf 1}_{2^n-1} &,& {\bf A}_{n}.
\end{array}
\right],
\end{equation}
i.e., each row of ${\bf A}_n$ is the binary representation of the row index (for example  ${\bf A}_2=\left[[0,1]^T,\right.$ $\left.[1,0]^T,[1,1]^T\right]^T$). For any scalar $a$ and vector ${\bf v}\in\mathds{R}^{K}$, ${\bf c}=a^{{\bf v}}-1$ is calculated point-wise; i.e., $c_i=a^{v_i}-1$.

The probability of outage of a fading MAC is given in definition \ref{def:MAC_outage_probability}. Obviously, the probability of outage can be expressed as
\begin{eqnarray}
\label{eq:MAC_outage_by_success}
\rm{Pr}^{\rm{out}}_{\rm{MAC}_n}=1-\Pr\left({\bf r}\in\mathcal{V}^{\rm{ins}}({\bf h})\right),
\end{eqnarray}
where $\bf{h}:=\left[h_1,h_2,\cdots,h_n\right]$ and $\mathcal{V}^{\rm{ins}}({\bf h})$ is the instantaneous capacity region. As can be seen from (\ref{ineq:MAC_capacity_Region}) the expression
${\bf r}\in\mathcal{V}^{\rm{ins}}({\bf h})$ stands for a conjunction of $(2^n-1)$ inequalities, each of which is in the form of
\begin{equation}
\label{ineq:MAC_success_inequality}
\displaystyle\sum_{i\in\mathcal{M}}{r_{i}}
\leq\log_2\left(1+\frac{P_{\mathcal{M}}}{\sigma^2}\right),
\end{equation}
where $P_{\mathcal{M}}=\displaystyle\sum_{i\in\mathcal{M}}{p_{i}|h_{i}|^2}$ and $\mathcal{M}$ is a subset of $\{1,2,\ldots,n\}$.
Rewriting (\ref{ineq:MAC_success_inequality}) in a matrix form yields
\begin{equation}
\label{ineq:MAC_success_inequality_matrix_form}
{\bf a}_{\mathcal{M}}^T {\bf r}\leq \log_2\left(1 + {\bf a}_{\mathcal{M}}^T\frac{1}{\sigma^2}{\bf P}{\boldsymbol{\eta}}\right),
\end{equation}
where, ${\bf a}_{\mathcal{M}}$ is a vector with length $n$ such that
\begin{equation}
a_i=\begin{cases}
		1 & i\in\mathcal{M}\\
		0 & \rm{otherwise},
	\end{cases}
\nonumber
\end{equation}
${\bf P}$ is an $n\times n$ diagonal matrix with $p_1,p_2,\ldots,p_n$ on the main diagonal and $\boldsymbol{\eta}=\left[|h_1|^2,|h_2|^2,\ldots,|h_n|^2\right]^T$. A simple algebraic operation yields that (\ref{ineq:MAC_success_inequality_matrix_form}) is equivalent to
\begin{equation}
\label{ineq:MAC_success_inequality_exp_matrix_form}
2^{{\bf a}_{\mathcal{M}}^T {\bf r}}-1 \leq  {\bf a}_{\mathcal{M}}^T\frac{1}{\sigma^2}{\bf P}\boldsymbol{\eta}.
\end{equation}
Note that since $|h_i|^2\;\;i=1,2,\ldots,n$ are independent exponential random variables with an expectation of $2\upsilon^2_i$, the random variables ${z}_i=\frac{1}{2\upsilon^2_i} |h_i|^2\;\;i=1,2,\ldots,n$ are i.i.d exponential random variables with an expectation of $1$. Hence, the event in (\ref{ineq:MAC_success_inequality_exp_matrix_form}) is equivalent to the event
\begin{equation}
\label{ineq:MAC_success_inequality_standard_exp_matrix_form}
2^{{\bf a}_{\mathcal{M}}^T {\bf r}}-1 \leq  {\bf a}_{\mathcal{M}}^T\frac{1}{\sigma^2}{\bf P}{\boldsymbol \Upsilon}{\bf z}_n,
\end{equation}
where ${\boldsymbol \Upsilon}$ is a diagonal matrix with $2\upsilon^2_1,2\upsilon^2_2,\ldots,2\upsilon^2_n$ on the main diagonal and ${\bf z}_n=[z_1,z_2,\ldots,z_n]^T$ is a vector of $n$ i.i.d standard exponential random variables; $E\{z_i\}=1$.
Therefore, from (\ref{eq:MAC_outage_by_success}) and (\ref{ineq:MAC_success_inequality_standard_exp_matrix_form}), it is implied that the outage probability in a MAC with $n$ links can be written as
\begin{equation}
\label{eq:Probability_of_success:A_N}
\rm{Pr}^{\rm{out}}_{\rm{MAC}_n}=1-
\Pr\left({\bf A}_n{\bf D}_n{\bf z}_n \geq {\bf b}_n\right),
\end{equation}
where ${\bf D}_n$ is a diagonal matrix with $\frac{1}{\lambda_1},\frac{1}{\lambda_2},\cdots,\frac{1}{\lambda_n}$ on the main diagonal, $\lambda_i=\frac{1}{2\upsilon_i^2}\frac{\sigma^2}{p_i}$ and ${\bf b}_n=2^{{\bf A}_n{\bf r}}-1$. For example in a MAC with $n=3$ links we have that
\begin{equation}
\begin{array}{ll}
	{\bf A}_3=\left[\begin{array}{lll}
	0&0&1\\0&1&0\\0&1&1\\1&0&0\\1&0&1\\1&1&0\\1&1&1
	\end{array}\right],
& {\bf b}_3=\left[\begin{array}{l}
2^{r_3}-1\\2^{r_2}-1\\2^{r_2+r_3}-1\\2^{r_1}-1\\2^{r_1+r_3}-1\\2^{r_1+r_2}-1\\2^{r_1+r_2+r_3}-1
\end{array}
\right].
\end{array}
\nonumber
\end{equation}
Note that when the MAC is with i.i.d links we have that ${\bf D}_n=\frac{1}{\lambda}{\bf I}_n$ where ${\bf I}_n$ is an $n\times n$ identity matrix.

Hence, the probability of outage in a MAC is related to the joint distribution of linear combinations of exponential random variables. Huffer and Lin \cite{Huffer_Computing_2001} presented an algorithm for the computation of the exact expression of the joint distribution of general linear combinations of spacings\footnote{Suppose $u_i\;\;i=1,2,\cdots,n$ are independently and uniformly distributed on the interval (0,1), and let $u_{(1)}\leq u_{(2)}\leq\cdots\leq u_{(n)}$ be the corresponding order statistics. The spacings $s_1,s_2,\cdots,s_{n+1}$ are defined by the successive differences between
the order statistics: $s_i=u_{(i)}-u_{(i-1)}$, where $u_{(0)}:=0$ and $u_{(n+1)}:=1$.} by  repeated uses of two recursions that reduce the dimensionality of the problem. They also pointed out that the algorithm remains valid as well for linear combinations of exponential random variables. However, this is inaccurate and in this paper we revise result to handle exponential random variables. The new recursion is given in Lemma \ref{lemma:Huffer_computing_2001} in Appendix \ref{appendix:recursion}.

By using the algorithm in \cite{Huffer_Computing_2001} together with Lemma \ref{lemma:Huffer_computing_2001}, an exact expression of the probability of common outage can be computed. However, the computation of a symbolic expression becomes extremely complicated in a MAC with more than $2$ links. Therefore, we present an upper and a lower bound on that outage probability. To that end, we need the following lemma:
\begin{lemma}
\label{lemma:A_n_has_a_row_with_one_and_zeros}
Let $\bf z$ be a vector of $n$ i.i.d exponential random variables with an expectation $E\{z_i\}=1$. If there exists an entry $a_{r,i}>0$ in $\bf A$ such that $a_{r,j}=0$ for all $j\neq i$ and $b_r\geq 0$, the following holds
\begin{equation}
\Pr\left({\bf Az}>\lambda{\bf b}\right)=e^{-\lambda\frac{b_r}{a_{r,i}}}
\Pr\left({\bf A}^{r*}{\bf z}>\lambda\left({\bf b}^{r*}-\frac{b_r}{a_{r,i}}{\bf a}^{r*}\right)\right).
\nonumber
\end{equation}
\end{lemma}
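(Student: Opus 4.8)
The plan is to exploit the special structure of row $r$ together with the memorylessness of the exponential distribution. By hypothesis row $r$ of $\mathbf{A}$ is supported on the single coordinate $i$, so within the event $\{\mathbf{A}\mathbf{z} > \lambda\mathbf{b}\}$ the $r$-th inequality reads $a_{r,i} z_i > \lambda b_r$, that is $z_i > \lambda b_r/a_{r,i}$ (this is where $a_{r,i}>0$ is used). I would therefore split the event into this one-dimensional constraint on $z_i$ and the remaining constraints coming from the rows $k\neq r$, and integrate out $z_i$ first. Since the $z_i$ are i.i.d.\ standard exponentials, conditioning on $z_i=u$ gives $\Pr(\mathbf{A}\mathbf{z} > \lambda\mathbf{b}) = \int_{\lambda b_r/a_{r,i}}^{\infty} e^{-u}\,\Pr\!\big(\text{rows }k\neq r\text{ hold}\big)\,du$, where the inner probability is taken over the coordinates $z_j,\ j\neq i$, and each surviving inequality is $a_{k,i}u + \sum_{j\neq i} a_{k,j} z_j > \lambda b_k$. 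The lower limit $\lambda b_r/a_{r,i}$ is exactly the point at which the row-$r$ constraint becomes binding, and it is nonnegative precisely because $b_r\geq 0$.

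Next I would substitute $u = \lambda b_r/a_{r,i} + w$ with $w\geq 0$. This factors $e^{-u} = e^{-\lambda b_r/a_{r,i}}\,e^{-w}$, pulling out the desired prefactor (this is the memorylessness step), and it shifts each remaining threshold: the $k$-th inequality becomes $a_{k,i}w + \sum_{j\neq i} a_{k,j} z_j > \lambda\big(b_k - \tfrac{b_r}{a_{r,i}} a_{k,i}\big)$. Collecting the thresholds for $k\neq r$ into a vector yields exactly $\lambda\big(\mathbf{b}^{r*} - \tfrac{b_r}{a_{r,i}}\mathbf{a}^{r*}\big)$, where $\mathbf{a}^{r*}$ is the $i$-th column of $\mathbf{A}$ with its $r$-th entry deleted. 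Relabeling $w$ as a fresh copy of the $i$-th coordinate, the integral $\int_0^{\infty} e^{-w}(\cdots)\,dw$ is then recognized as the probability that a standard-exponential vector (with $w$ placed in slot $i$) satisfies the reduced system $\mathbf{A}^{r*}\mathbf{z} > \lambda\big(\mathbf{b}^{r*} - \tfrac{b_r}{a_{r,i}}\mathbf{a}^{r*}\big)$, which gives the claimed identity.

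I do not expect a genuine obstacle: the entire content is the shift/memoryless property $\Pr(z_i > c + w \mid z_i > c) = \Pr(z_i > w)$ combined with the independence of the coordinates. The only points demanding care are the bookkeeping ones, namely confirming that $\mathbf{a}^{r*}$ is the $i$-th column of $\mathbf{A}$ (with its $r$-th entry removed) rather than a row, and verifying that the relabeled variable $w$ is genuinely a standard exponential independent of the other coordinates, so that the reduced integral is a bona fide probability of the same form and the recursion can be iterated.
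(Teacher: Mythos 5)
Your proof is correct. Conditioning on $z_i=u$, restricting the integral to $u \ge \lambda b_r/a_{r,i}$ (which is exactly where $a_{r,i}>0$ and $b_r\ge 0$ enter), substituting $u=\lambda b_r/a_{r,i}+w$ to factor out $e^{-\lambda b_r/a_{r,i}}$, and recognizing the residual integral as the probability of the reduced system with a fresh standard exponential in slot $i$ is a complete argument, and your bookkeeping remark is right: $\mathbf{a}^{r*}$ must be read as the $i$-th \emph{column} of $\mathbf{A}$ with its $r$-th entry deleted. The paper, however, does not argue this way: it obtains the lemma as the immediate special case $k=1$, $\Psi=\{i\}$ of the general recursion proved in Appendix~A (Lemma~\ref{lemma:Huffer_computing_2001}), which treats a row supported on $k$ identical columns by partitioning into disjoint events $D_m$ and produces a sum of Erlang-weighted terms $\frac{1}{m!}(\lambda\delta)^m e^{-\lambda\delta}$. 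Your route is more elementary and self-contained, needing only the one-dimensional shift/memoryless property and independence, and it avoids the general machinery entirely; the paper's route buys the full recursion, which it needs anyway (for exact computation of outage probabilities and for rows hitting repeated columns), so this lemma falls out as a one-line corollary. At bottom the two arguments rest on the same fact: conditionally on $z_i>\lambda b_r/a_{r,i}$, the overshoot $z_i-\lambda b_r/a_{r,i}$ is again a standard exponential independent of the remaining coordinates.
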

\begin{proof}
{This lemma is an immediate consequence of Lemma \ref{lemma:Huffer_computing_2001} (i.e., when $k=1$ and $\Psi=\{j\}$).
}
\end{proof}

In \cite{Rao_Outage_2013} it was pointed out that the outage probability of a MAC with $n$ i.i.d links is bounded from below by
\begin{equation}
\label{ineq:Lower_MAC_Outage_Bound_gamma}
\rm{Pr}^{\rm{out}}_{\rm{MAC}_n}\geq 1-e^{-\lambda S_n}\frac{\Gamma\left(n,\lambda\left(2^{R_n}-1-S_n\right)\right)}{(n-1)!},
\end{equation}
where $\Gamma(n,x)$ is the incomplete gamma function, $S_n=\displaystyle\sum_{k=1}^{n}{(2^{r_n}-1)}$ and $R_n=\displaystyle\sum_{i=1}^{n}{r_i}$.
In the case of a Rayleigh fading MAC with independent links but with different variances, Theorem \ref{theorem:lower_bound_outage_probability} gives a lower bound on the probability of outage.
\begin{theorem}
\label{theorem:lower_bound_outage_probability}
Let $\lambda_i=\frac{1}{2\nu^2}\frac{\sigma}{p_i}$ $i=1,2,\cdots,n$ have distinct values; i.e., $\lambda_i\neq \lambda_j$ for all $i\neq j$, then the probability of a MAC with $n$ independent $\rm{Rayleigh}(\upsilon_i)$ channel $i=1,2,\cdots,n$ is bounded below by
\begin{equation}
\rm{Pr}^{\rm{out}}_{\rm{MAC}_n}\geq 1-
\displaystyle\sum_{i=1}^{n}{\gamma_i e^{-\beta_n-\lambda_i\left(2^{R_n}-S_n-1\right)}},
\end{equation}
where,
$\gamma_i=\displaystyle\prod_{j\neq i}{\frac{\lambda_j}{\lambda_j-\lambda_i}}$, $\beta_n=\displaystyle\sum_{i=1}^{n}{\lambda_i(2^{r_i}-1)}$, $S_n=\displaystyle\sum_{i=1}^{n}{(2^{r_i}-1)}$ and $R_n=\displaystyle\sum_{i=1}^{n}{r_i}$.
\end{theorem}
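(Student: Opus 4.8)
The plan is to bound the success probability $\Pr({\bf A}_n{\bf D}_n{\bf z}_n\geq{\bf b}_n)$ of (\ref{eq:Probability_of_success:A_N}) from above, which by (\ref{eq:MAC_outage_by_success}) is precisely equivalent to bounding $\rm{Pr}^{\rm{out}}_{\rm{MAC}_n}$ from below. The event ${\bf A}_n{\bf D}_n{\bf z}_n\geq{\bf b}_n$ is the intersection of all $2^n-1$ constraints indexed by the nonempty subsets $\mathcal{M}\subseteq\{1,\ldots,n\}$. Since discarding constraints from an intersection can only enlarge the resulting event, the first step is to relax this intersection, retaining only the $n$ singleton rows of ${\bf A}_n$ (each equal to ${\bf e}_i^T$, with threshold $2^{r_i}-1$) together with the single grand-coalition row ${\bf 1}_n^T$ (with threshold $2^{R_n}-1$). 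Writing $\tilde{\bf A}$ for this $(n+1)\times n$ submatrix and $\tilde{\bf b}$ for the corresponding thresholds, this yields $\Pr({\bf A}_n{\bf D}_n{\bf z}_n\geq{\bf b}_n)\leq\Pr(\tilde{\bf A}{\bf D}_n{\bf z}_n\geq\tilde{\bf b})$.

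Next I would evaluate the right-hand side exactly by peeling off the $n$ singleton rows one at a time using Lemma \ref{lemma:A_n_has_a_row_with_one_and_zeros}. In the matrix $\tilde{\bf A}{\bf D}_n$ each singleton row $i$ equals $\frac{1}{\lambda_i}{\bf e}_i^T$, a single positive entry with nonnegative threshold $2^{r_i}-1$, so the lemma applies with $b_r/a_{r,i}=\lambda_i(2^{r_i}-1)$. Each application contributes a factor $e^{-\lambda_i(2^{r_i}-1)}$, and since column $i$ is nonzero only in row $i$ and in the grand-coalition row (where its entry is $\frac{1}{\lambda_i}$), the reduction leaves every other singleton threshold untouched while lowering the grand-coalition threshold by exactly $2^{r_i}-1$. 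Iterating over $i=1,\ldots,n$ therefore accumulates the factor $\prod_i e^{-\lambda_i(2^{r_i}-1)}=e^{-\beta_n}$ and reduces the grand-coalition threshold from $2^{R_n}-1$ to $2^{R_n}-1-S_n$, leaving the single surviving constraint $\sum_{i=1}^n z_i/\lambda_i\geq 2^{R_n}-S_n-1$.

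It then remains to evaluate $\Pr(\sum_{i=1}^n z_i/\lambda_i\geq 2^{R_n}-S_n-1)$. Because $z_i$ is standard exponential, $z_i/\lambda_i$ is exponential with rate $\lambda_i$, so the sum follows a hypoexponential (generalized Erlang) law whose complementary distribution function, for pairwise distinct rates, admits the partial-fraction form $\sum_{i=1}^n\gamma_i e^{-\lambda_i c}$ with $\gamma_i=\prod_{j\neq i}\lambda_j/(\lambda_j-\lambda_i)$. Setting $c=2^{R_n}-S_n-1$ and combining with the accumulated factor $e^{-\beta_n}$ gives $\Pr(\tilde{\bf A}{\bf D}_n{\bf z}_n\geq\tilde{\bf b})=\sum_{i=1}^n\gamma_i e^{-\beta_n-\lambda_i(2^{R_n}-S_n-1)}$, which is the asserted upper bound on the success probability and hence the claimed lower bound on the outage. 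The main obstacle will be the bookkeeping in the repeated reduction — verifying that the singleton thresholds never interfere with one another and that the grand-coalition threshold decreases additively — together with the distinct-rate survival function, where the hypothesis $\lambda_i\neq\lambda_j$ is exactly what makes the partial-fraction expansion and the closed form for $\gamma_i$ valid.
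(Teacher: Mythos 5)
Your proposal is correct and follows essentially the same route as the paper's own proof: relax the success event to the $n$ singleton constraints plus the grand-coalition constraint, eliminate the singleton rows by repeated use of Lemma \ref{lemma:A_n_has_a_row_with_one_and_zeros} (accumulating $e^{-\beta_n}$ and reducing the sum-rate threshold to $2^{R_n}-S_n-1$), and then evaluate $\Pr\bigl(\sum_{i=1}^n z_i/\lambda_i\geq 2^{R_n}-S_n-1\bigr)$ via the distinct-rate hypoexponential survival function, which is exactly the formula the paper cites from Huffer (1988). Your extra bookkeeping detail (that each elimination leaves the other singleton thresholds untouched and lowers the grand-coalition threshold additively) is precisely what the paper's $n$-fold application of the lemma implicitly relies on.
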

\begin{proof}
As was explained earlier, the expression ${\bf A}_n{\bf D}_n{\bf z}_n \geq {\bf b}_n$ in equation (\ref{eq:Probability_of_success:A_N}) stands for a conjunction of $(2^n-1)$ inequalities. The inequalities that are related to the rows of ${\bf A}_n$ indexed by $2^i\;\;i=0,1,\ldots,(n-1)$ stand for the direct instantaneous capacity constraints $r_i\leq\log_2\left(1+\frac{|h_i|^2 p_i}{\sigma^2}\right)$, whereas all the other inequalities refer to constraints that involve more than one link of the MAC capacity region (see eq. (\ref{ineq:MAC_capacity_Region})). In particular, the $(2^n-1)^{\rm{th}}$ inequality refers to the constraint
\begin{equation}
\displaystyle\sum_{i=1}^n{r_i}\leq\log_2\left(1+\displaystyle\sum_{i=1}^n\frac{p_i|h_i|^2}{\sigma^2}\right).
\nonumber
\end{equation}
Obviously,
\begin{equation}
\label{ineq:submatrix_of_A_n}
\Pr\left({\bf A}_n{\bf D}_n{\bf z}_n \geq {\bf b}_n\right)\leq\Pr\left(\tilde{\bf A}_n{\bf D}_n{\bf z}_n \geq \tilde{\bf b}_n\right),
\end{equation}
where $\tilde{\bf A}_n$ is a submatrix of ${\bf A}_n$ constructed by taking the rows indexed by $\{2^i\;:\;i=0,1,\ldots,(n-1)\}\cup\{2^n-1\} $ of ${\bf A}_n$ and $\tilde{\bf b}_n$ is a sub-vector of ${\bf b}_n$ constructed by taking the appropriate entries of ${\bf b}_n$.
Note that up to a permutation of the rows, the first $n$ rows of $\tilde{\bf A}_n$ is the identity matrix. Therefore, we can eliminate these $n$ rows by $n$ uses of lemma \ref{lemma:A_n_has_a_row_with_one_and_zeros}. Hence, we have that
\begin{equation}
\Pr\left(\tilde{\bf A}_n{\bf D}_n{\bf z}_n \geq \tilde{\bf b}_n\right)=e^{-\beta_n}\Pr\left(\displaystyle\sum_{i=1}^n{\frac{z_i}{\lambda_i}}\geq x\right),
\end{equation}
where $x=2^{R_n}-S_n-1$. The probability of a distinct coefficients linear combination of i.i.d exponential variables is given
by \cite{Huffer_Divided_1988}
\begin{equation}
\label{eq:distinct_iid_exponential_CDF}
\Pr\left(\displaystyle\sum_{i=1}^n{\frac{z_i}{\lambda_i}}\geq x\right)=
\displaystyle\sum_{i=1}^{n}{\gamma_i e^{-\lambda_i x}}.
\end{equation}
The claim now follows.
\end{proof}

Note that Theorem \ref{theorem:lower_bound_outage_probability} is valid only when $\lambda_1,\lambda_2,\ldots,\lambda_n$ are all distinct. When we have a MAC with a set of $K$ links with the same value of $\lambda_i$ and $n-K$ links with distinct values of $\lambda$, a similar bound can be computed by replacing the probability in (\ref{eq:distinct_iid_exponential_CDF}) by integrating the expression of the pdf derived in \cite{Khuong_General_2006}. We omit the calculation of this probability here, for the sake of brevity.

For simplicity, in the derivation of the upper bound we only consider a MAC with $n$ i.i.d links. Computing the bound for the case where there are independent links with nonidentical variances is much complicated and is considered as future work. The upper bound for a MAC with $3$ links is given in Lemma \ref{lemma:MAC_3_bound} and the upper bound for the general case is given in Theorem \ref{theorem:MAC_probability_lower_bound}.
\begin{lemma}
\label{lemma:MAC_3_bound}
The probability of common outage of a MAC with $3$ i.i.d links is bounded by
\begin{equation}
\rm{Pr}^{\rm{out}}_{\rm{MAC}_3}\leq 1-e^{-\lambda\left(2^{R_3}-1\right)}G(\lambda\alpha_3)
\end{equation}
where $G(x)=\frac{1}{2}x^2+x+1$, $\lambda=\frac{1}{2\upsilon^2}\frac{\sigma^2}{p}$, $R_3=\displaystyle\sum_{i=1}^{3}{r_i}$ and $\alpha_3=(2^{r_1}-1)(2^{r_2}-1)(2^{r_3}-1)$.
\end{lemma}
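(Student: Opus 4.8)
The plan is to convert the outage bound into a lower bound on the success probability and then peel the constraints one family at a time. Starting from (\ref{eq:Probability_of_success:A_N}) with ${\bf D}_3=\frac{1}{\lambda}{\bf I}_3$, I would write $\rm{Pr}^{\rm{out}}_{\rm{MAC}_3}=1-\Pr\left({\bf A}_3{\bf z}_3\geq\lambda{\bf b}_3\right)$, so it suffices to show $\Pr\left({\bf A}_3{\bf z}_3\geq\lambda{\bf b}_3\right)\geq e^{-\lambda(2^{R_3}-1)}G(\lambda\alpha_3)$. Writing $t_i:=2^{r_i}-1$, $S_3:=t_1+t_2+t_3$ and $Q:=t_1t_2+t_1t_3+t_2t_3$, one has $\alpha_3=t_1t_2t_3$ and, expanding $(1+t_1)(1+t_2)(1+t_3)$, the exponent factors as $2^{R_3}-1=S_3+Q+\alpha_3$. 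The seven rows of ${\bf A}_3$ split naturally into three single-link rows (the standard basis vectors), three pairwise rows, and the all-ones row.

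Second, I would peel the three single-link constraints using Lemma \ref{lemma:A_n_has_a_row_with_one_and_zeros}. Each of the rows $e_1,e_2,e_3$ has exactly one positive entry, so three successive applications of the lemma produce the factor $e^{-\lambda(t_1+t_2+t_3)}=e^{-\lambda S_3}$ and subtract the corresponding threshold from every row sharing that coordinate. A short bookkeeping check shows that after all three deletions the surviving system, on a fresh vector ${\bf v}$ of i.i.d.\ standard exponentials, is exactly the three pairwise constraints $v_i+v_j\geq\lambda t_it_j$ together with the triple constraint $v_1+v_2+v_3\geq\lambda(Q+\alpha_3)$. Hence $\Pr\left({\bf A}_3{\bf z}_3\geq\lambda{\bf b}_3\right)=e^{-\lambda S_3}\,\Pr(\text{pairwise and triple hold})$.

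Third -- and this is where exactness must give way to an inequality, since no surviving row has a single nonzero entry and Lemma \ref{lemma:A_n_has_a_row_with_one_and_zeros} can no longer be applied -- I would lower-bound the remaining probability by restricting to a sufficient event. The idea is to absorb the three pairwise thresholds into single-coordinate thresholds whose total mass is exactly $\lambda Q$; for instance, requiring $v_1\geq\lambda(t_1t_2+t_1t_3)$ and $v_2\geq\lambda t_2t_3$ forces all three pairwise inequalities (using $v_i\geq 0$). On this event a final memoryless shift $w_1=v_1-\lambda(t_1t_2+t_1t_3)$, $w_2=v_2-\lambda t_2t_3$, $w_3=v_3$ leaves i.i.d.\ standard exponentials, and the triple constraint collapses to $w_1+w_2+w_3\geq\lambda\alpha_3$. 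Since the survival function of a sum of three standard exponentials is $e^{-x}\left(1+x+\tfrac{1}{2}x^2\right)=e^{-x}G(x)$, this event has probability $e^{-\lambda Q}e^{-\lambda\alpha_3}G(\lambda\alpha_3)$, so $\Pr(\text{pairwise and triple hold})\geq e^{-\lambda Q}e^{-\lambda\alpha_3}G(\lambda\alpha_3)$.

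Finally I would assemble the factors to obtain $\Pr\left({\bf A}_3{\bf z}_3\geq\lambda{\bf b}_3\right)\geq e^{-\lambda(S_3+Q+\alpha_3)}G(\lambda\alpha_3)=e^{-\lambda(2^{R_3}-1)}G(\lambda\alpha_3)$, which is the claim. I expect the main obstacle to be the third step: the three pairwise rows overlap in their coordinates, so they cannot be peeled exactly (unlike the two-link case, where a single constraint survives and the bound is tight), and the crux is to exhibit a nonnegative split of $\lambda Q$ over the coordinates that simultaneously dominates every pairwise threshold while leaving precisely the threshold $\lambda\alpha_3$ for the triple sum. The asymmetric split above always exists, and that is exactly what makes the $G$-polynomial, i.e.\ the Erlang-3 survival function, appear.
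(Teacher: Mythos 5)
Your proof is correct and takes essentially the same route as the paper: peel the three single-link rows via Lemma \ref{lemma:A_n_has_a_row_with_one_and_zeros} to extract $e^{-\lambda S_3}$, dominate the three surviving pairwise constraints by single-coordinate constraints whose thresholds total $\lambda Q$, and finish with the Erlang-3 tail $e^{-x}G(x)$ at $x=\lambda\alpha_3$. The only difference is cosmetic: the paper assigns each pairwise threshold to its own coordinate (replacing the rows $[0,1,1]$, $[1,0,1]$, $[1,1,0]$ by $[0,1,0]$, $[0,0,1]$, $[1,0,0]$ while keeping the thresholds, then peeling these with the same lemma), whereas you fold the same total mass $\lambda Q$ onto two coordinates; both choices leave exactly $\lambda\alpha_3$ for the triple constraint and produce the identical bound.
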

\begin{proof}
The probability of a successful (non-outage) transmission is given by
\begin{equation}
1-\rm{Pr}^{\rm{out}}_{\rm{MAC}_3}=\Pr\left({\bf A}_3{\bf z}_3\geq \lambda {\bf b}_3\right).
\end{equation}
Define the following constants $\beta_i=2^{r_i}-1$ and $\beta_{i,j}=2^{r_i+r_j}-1$. Note that the rows of ${\bf A}_3$ and ${\bf b}_3$ indexed by $\{2^i\;:\;i=0,1,2\}$ satisfy the conditions of lemma \ref{lemma:A_n_has_a_row_with_one_and_zeros}. Hence, by three uses of Lemma \ref{lemma:A_n_has_a_row_with_one_and_zeros} we can eliminate these rows of ${\bf A}$ and ${\bf b}_3$. These three uses of Lemma \ref{lemma:A_n_has_a_row_with_one_and_zeros} are legitimate, since after each use of the lemma the result matrix and vector still satisfy the conditions of Lemma \ref{lemma:A_n_has_a_row_with_one_and_zeros} (since $\beta_i\geq 0$, $\beta_{i,j}-\displaystyle\sum_{i\in{\mathcal{B}}}{\beta_i}\geq 0$ and $(2^{R_3}-1)-\displaystyle\sum_{i=1}^{j}{\beta_i}\geq 0$, for all $\mathcal{B}\subseteq\{i,j\}$ and $i,j\in\{1,2,3\}$). Therefore, the probability of successful transmission can be rewritten as
\begin{equation}
\label{eq:probability_of_success_A(4)}
1-\rm{Pr}^{\rm{out}}_{\rm{MAC}_3}=e^{-\lambda S_3}\Pr\left({\bf A}{\bf z}\geq\lambda\tilde{\bf b}\right),
\end{equation}
where ${\bf A}$ is a submatrix of ${\bf A}_3$ constructed by deleting the rows indexed by $\{2^i:i=0,1,2\}$, $S_3=\displaystyle\sum_{i=1}^{3}{(2^{r_i}-1)}$ and
$\tilde{\bf b}=\left[\beta_{2,3}-\beta_2-\beta_3\right.$
,$\beta_{1,3}-\beta_1-\beta_3$
,$\beta_{1,2}-\beta_1-\beta_2$
,$\left.(2^{R_3}-1)-\beta_1-\beta_2-\beta_3\right]^T$.
It is easy to see that
\begin{equation}
\label{eq:A(4)_inequality}
\Pr\left({\bf A}{\bf z}\geq\lambda\tilde{\bf b}\right)\geq\Pr\left(\tilde{\bf A}{\bf z}\geq\lambda\tilde{\bf b}\right),
\end{equation}
where
\begin{equation}
\begin{array}{ll}
\bf A=\left[\begin{array}{lll}
	0&1&1\\1&0&1\\1&1&0\\1&1&1
	\end{array}\right],
& \tilde{\bf A}=\left[\begin{array}{lll}
	0&1&0\\0&0&1\\1&0&0\\1&1&1
	\end{array}\right],
\end{array}
\nonumber
\end{equation}
since the eliminated $z_1,z_2$ and $z_3$ are non-negative random variables.
Note that since  $\tilde{b}_4-\displaystyle\sum_{i=0}^{3}{\tilde{b}_i}=\alpha\geq 0$ and $\tilde{\bf b}\geq 0$, we have that $\tilde{b}_4-\displaystyle\sum_{i=0}^{j}{\tilde{b}_i}\geq 0$ for all $j\in\{1,2,3\}$. Again, by three uses of Lemma \ref{lemma:A_n_has_a_row_with_one_and_zeros}, we can eliminate the first three rows of $\tilde{\bf A}$ and write
\begin{equation}
\label{eq:A_tilde_probability}
\Pr\left(\tilde{\bf A}{\bf z}\geq\lambda\tilde{\bf b}\right)=e^{-\lambda\tilde{\gamma}}\Pr\left(z_1+z_2+z_3\geq\lambda\alpha_3\right),
\end{equation}
where $\tilde{\gamma}=\tilde{b}_1+\tilde{b}_2+\tilde{b}_3$.
Note that $Z=z_1+z_2+z_3$ has an $\rm{Erlang}(3,1)$ distribution and therefore
\begin{equation}
\Pr\left(Z>z\right)=e^{-z}G(z).
\end{equation}
Hence, (\ref{eq:A_tilde_probability}) can be rewritten as
\begin{equation}
\label{eq:A_tilde_probability_rewritten}
\Pr\left(\tilde{\bf A}{\bf z}\geq\lambda\tilde{\bf b}\right)=e^{-\lambda\tilde{\gamma}}e^{-\lambda\alpha_3}G(\lambda\alpha_3).
\end{equation}
Combining (\ref{eq:probability_of_success_A(4)}),(\ref{eq:A(4)_inequality}) and (\ref{eq:A_tilde_probability_rewritten}) yields
\begin{equation}
1-\rm{Pr}^{\rm{out}}_{\rm{MAC}_3}\geq e^{-\lambda\left(\alpha_3+\gamma+\tilde{\gamma}\right)}G(\lambda\alpha_3).
\end{equation}
The claim now follows from the fact that
\begin{equation}
\alpha_3+\gamma+\tilde{\gamma}=2^{R_3}-1.
\nonumber
\end{equation}
\end{proof}

\begin{theorem}
\label{theorem:MAC_probability_lower_bound}
The probability of common outage of a MAC with $n\geq 3$ i.i.d $\rm{Rayleigh}(\upsilon)$ channels is bounded by
\begin{equation}
\label{ineq:Upper_MAC_Outage_Bound}
\rm{Pr}^{\rm{out}}_{\rm{MAC}_n}\leq 1-e^{-\lambda\left(2^{R_n}-1\right)}\tilde{G}(\lambda\alpha_n)
\end{equation}
where $G(x)=\frac{1}{2}x^2+x+1$, $R_n=\displaystyle\sum_{i=1}^{n}{r_i}$ and $\alpha_n=\displaystyle\prod_{i=1}^{n}{(2^{r_i}-1)}$.
\end{theorem}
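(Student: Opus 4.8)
The plan is to generalise the two–stage peeling of Lemma~\ref{lemma:MAC_3_bound}: an \emph{exact} stage that removes the $n$ single–link constraints by memorylessness, followed by a \emph{strengthening} stage that replaces every remaining constraint by single–link constraints, so that only one full–support constraint survives and its Erlang tail produces the polynomial factor. Throughout write $\beta_i=2^{r_i}-1\ge 0$ and $R_{\mathcal M}=\sum_{i\in{\mathcal M}}r_i$, and recall from the i.i.d.\ specialisation of (\ref{eq:Probability_of_success:A_N}) that $1-\rm{Pr}^{\rm{out}}_{\rm{MAC}_n}=\Pr({\bf A}_n{\bf z}_n\ge\lambda{\bf b}_n)$ is the conjunction, over all nonempty ${\mathcal M}\subseteq\{1,\dots,n\}$, of the events $\sum_{i\in{\mathcal M}}z_i\ge\lambda(2^{R_{\mathcal M}}-1)$. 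The identity $2^{R_{\mathcal M}}-1=\prod_{i\in{\mathcal M}}(1+\beta_i)-1=\sum_{\emptyset\neq S\subseteq{\mathcal M}}\prod_{i\in S}\beta_i$ is the bookkeeping device for all the accumulated shifts.

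Stage one: the $n$ rows of ${\bf A}_n$ indexed by $2^i$, $i=0,\dots,n-1$, are the single–link constraints $z_i\ge\lambda\beta_i$ and satisfy the hypotheses of Lemma~\ref{lemma:A_n_has_a_row_with_one_and_zeros}; they remain valid under successive peeling because deleting a singleton row for coordinate $i$ leaves the other singleton rows (which carry a zero in column $i$) untouched and requires only $\beta_i\ge 0$. Peeling all $n$ of them extracts the factor $e^{-\lambda S_n}$ with $S_n=\sum_i\beta_i$, and converts the constraint attached to ${\mathcal M}$ (for $|{\mathcal M}|\ge 2$) into $\sum_{i\in{\mathcal M}}z_i\ge\lambda\tilde b_{\mathcal M}$, where $\tilde b_{\mathcal M}=(2^{R_{\mathcal M}}-1)-\sum_{i\in{\mathcal M}}\beta_i=\sum_{S\subseteq{\mathcal M},\,|S|\ge 2}\prod_{i\in S}\beta_i\ge 0$.

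Stage two: I would lower–bound the surviving probability by intersecting with a stronger event $F$ consisting of $n$ single–link constraints plus the single full–support constraint. Fix any map $c$ sending each $S$ with $2\le|S|\le n-1$ to an element $c(S)\in S$ (for concreteness $c(S)=\max S$), set $\theta_j=\sum_{S:\,c(S)=j}\prod_{i\in S}\beta_i\ge 0$, and let $F=\{z_j\ge\lambda\theta_j,\ j=1,\dots,n\}\cap\{\sum_{i=1}^n z_i\ge\lambda\,\tilde b_{\{1,\dots,n\}}\}$. Since $S\subseteq{\mathcal M}$ forces $c(S)\in{\mathcal M}$, one gets $\sum_{i\in{\mathcal M}}\theta_i=\sum_{S:\,c(S)\in{\mathcal M}}\prod_{i\in S}\beta_i\ge\tilde b_{\mathcal M}$ for every proper ${\mathcal M}$, so $F$ implies each stage–one constraint and hence $F$ is contained in the stage–one event. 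The virtue of $F$ is that its $n$ single–link constraints occupy pairwise distinct coordinates, so $n$ further applications of Lemma~\ref{lemma:A_n_has_a_row_with_one_and_zeros} peel them without disturbing one another and only lower the full–support threshold: this yields the factor $e^{-\lambda\sum_j\theta_j}$ and the residual constraint $\sum_{i=1}^n z_i\ge\lambda(\tilde b_{\{1,\dots,n\}}-\sum_j\theta_j)=\lambda\alpha_n$. As $\sum_{i=1}^n z_i$ is $\rm{Erlang}(n,1)$, its survival function supplies $e^{-\lambda\alpha_n}\tilde G(\lambda\alpha_n)$ with $\tilde G(x)=\sum_{j=0}^{n-1}x^j/j!$, which is exactly $G$ when $n=3$.

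Recombining the three exponential factors, the total exponent is $S_n+\sum_j\theta_j+\alpha_n$; grouping by subset size this equals $\sum_{\emptyset\neq S\subseteq\{1,\dots,n\}}\prod_{i\in S}\beta_i=2^{R_n}-1$, so $1-\rm{Pr}^{\rm{out}}_{\rm{MAC}_n}\ge e^{-\lambda(2^{R_n}-1)}\tilde G(\lambda\alpha_n)$, which is the claim. The step I expect to demand the most care is the stage–two construction of the $\theta_j$: I must simultaneously verify that $F$ really lies inside the stage–one event (the inequalities $\sum_{i\in{\mathcal M}}\theta_i\ge\tilde b_{\mathcal M}$) and that the single–link rows of $F$ use distinct coordinates, so that the $n$ invocations of Lemma~\ref{lemma:A_n_has_a_row_with_one_and_zeros} are each legitimate and shift only the full–support row; the telescoping identity is precisely what forces the residual threshold to collapse to $\alpha_n$.
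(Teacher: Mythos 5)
Your proof is correct, but it takes a genuinely different route from the paper's. The paper proves the theorem by induction on $n$: the base case is Lemma \ref{lemma:MAC_3_bound}, and the inductive step peels the single row $\left[1,{\bf 0}_k^T\right]$ of ${\bf A}_{k+1}$ via Lemma \ref{lemma:A_n_has_a_row_with_one_and_zeros}, discards $z_1$ from all remaining constraints (using $z_1\geq 0$), collapses the resulting duplicated block $\left[{\bf 0}_{2^k-1},{\bf A}_k\right]$, and applies the induction hypothesis at the rescaled parameter $\tilde{\lambda}=\lambda 2^{r_1}$, finishing with $\tilde{\lambda}\tilde{\alpha}_2\geq\lambda\alpha_{k+1}$ and monotonicity of $G$. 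You instead generalize the two-stage structure of the $n=3$ proof directly to all $n$: an exact peel of the $n$ singleton rows, then a strengthening to the product-form event $F$ built from a choice map $c$ (the paper's Lemma \ref{lemma:MAC_3_bound} does exactly this for $n=3$, with the implicit assignment $c(\{2,3\})=2$, $c(\{1,3\})=3$, $c(\{1,2\})=1$), and a final Erlang tail computation. Your two bookkeeping facts --- $\sum_{i\in\mathcal{M}}\theta_i\geq\tilde{b}_{\mathcal{M}}$ for every proper $\mathcal{M}$, and $S_n+\sum_{j}\theta_j+\alpha_n=2^{R_n}-1$ --- are both airtight, and each invocation of Lemma \ref{lemma:A_n_has_a_row_with_one_and_zeros} only needs nonnegativity of the peeled row's threshold, which you verify. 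What your approach buys is significant: you avoid the induction entirely, and you obtain the degree-$(n-1)$ polynomial factor $\tilde{G}(x)=\sum_{k=0}^{n-1}x^k/k!$, whereas the paper's induction can only propagate the fixed quadratic $G(x)=\frac{1}{2}x^2+x+1$ (the hypothesis is invoked at stage $k$ with $G$, so the degree cannot grow with $n$). Since $\tilde{G}(x)\geq G(x)$ for $x\geq 0$, your bound is strictly tighter for $n\geq 4$; it also settles, in the strongest sense, the notational slip in the theorem statement, which displays $\tilde{G}(\lambda\alpha_n)$ while defining only the quadratic $G$: the paper's own proof establishes the bound with $G$, while yours establishes it with $\tilde{G}$, consistent with the sandwich inequality the paper records after the theorem.
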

\begin{proof}
The proof proceeds by induction on $n$. The claim is true for $n=3$ by Lemma \ref{lemma:MAC_3_bound}. Let the statement be true for $n=k$. We will now
prove the result for $n=k+1$. From (\ref{eq:Probability_of_success:A_N}) we have that
\begin{equation}
\label{eq:Probability_of_success:A_k+1}
1-\rm{Pr}^{\rm{out}}_{\rm{MAC}_{k+1}}=
\Pr\left({\bf A}_{k+1}{\bf z}_{k+1}\geq \lambda {\bf b}_{k+1}\right).
\end{equation}
Note that $\left[{\bf b}_{k+1}\right]_{2^k}=2^{r_1}-1$. Also note that by exploiting the structure of ${\bf A}_{k+1}$ (see equation (\ref{eq:A_n:recursion})), the $2^k$'th row of ${\bf A}_{k+1}$ can be eliminated by using Lemma \ref{lemma:A_n_has_a_row_with_one_and_zeros}. Hence,
\begin{equation}
\label{eq:Probability_of_success:A_k+1,reduced}
1-\rm{Pr}^{\rm{out}}_{\rm{MAC}_{k+1}}=
e^{-\lambda(2^{r_1}-1)}\Pr\left(\tilde{\bf A}_{k+1}{\bf z}_{k+1}\geq \lambda \tilde{\bf b}_{k+1}\right),
\end{equation}
where
\begin{equation}
\label{eq:A_k+1:reduced}
\tilde{\bf A}_{k+1}=\left[\begin{array}{lll}
{\bf 0}_{2^k-1} &,& {\bf A}_{k}\\
{\bf 1}_{2^k-1} &,& {\bf A}_{k}
\end{array}
\right]
\end{equation}
and $\tilde{\bf b}_{k+1}=\left(2^{\tilde{\bf A}_{k+1}{\bf r}}-1\right)-(2^{r_1}-1)\left[{\bf 0}_{2^k-1}^T, {\bf 1}_{2^k-1}^T\right]^T$.
It can easily be verified that
\begin{equation}
\label{eq:Probability_A_check}
\Pr\left(\tilde{\bf A}_{k+1}{\bf Z}_{k+1}\geq \lambda \tilde{\bf b}_{k+1}\right)\geq
\Pr\left(\check{\bf A}_{k+1}{\bf Z}_{k+1}\geq \lambda \tilde{\bf b}_{k+1}\right),
\end{equation}
where
\begin{equation}
\check{\bf A}_{k+1}=\left[\begin{array}{lll}
{\bf 0}_{2^k-1} &,& {\bf A}_{k}\\
{\bf 0}_{2^k-1} &,& {\bf A}_{k}
\end{array}
\right],
\nonumber
\end{equation}
since $z_1\geq 0$.
Note that for any two vectors ${\bf 0}_{2^k-1}\leq{\bf x}_1\leq{\bf x}_2$ the following holds
\begin{equation}
\label{eq:A_check_reduced}
\Pr\left(\check{\bf A}_{k+1}{\bf z}_{k+1}\geq \left[{\bf x}_1^T,{\bf x}_2^T\right]^T\right)=\Pr\left({\bf A}_k{\bf z}_k\geq {\bf x}_2\right).
\end{equation}
Note that
\begin{equation}
2^{\tilde{\bf A}_{k+1}{\bf r}}=\left[\begin{array}{l}
2^{{\bf A}_k\boldsymbol{\gamma}}\\
2^{r_1}\cdot 2^{{\bf A}_k\boldsymbol{\gamma}}
\end{array}
\right],
\end{equation}
where $\boldsymbol{\gamma}=[r_2,r_3,\ldots,r_{k+1}]^T$. Therefore, we have that
\begin{equation}
\label{eq:b_k+1_tilde}
\tilde{\bf b}_{k+1}=\left[\begin{array}{l}
2^{{\bf A}_k\boldsymbol{\gamma}}-1\\
2^{r_1}\left(2^{{\bf A}_k\boldsymbol{\gamma}}-1\right)
\end{array}
\right].
\end{equation}
Since $2^{r_1}\geq 1$, combining
(\ref{eq:A_check_reduced}) with (\ref{eq:b_k+1_tilde}) yields
\begin{equation}
\label{eq:A_check_reduced_even_more}
\Pr\left(\check{\bf A}_{k+1}{\bf z}_{k+1}\geq \lambda\tilde{\bf b}_{k+1}\right)=\Pr\left({\bf A}_k{\bf z}_k\geq \tilde{\lambda}\left(2^{{\bf A}_k\boldsymbol{\gamma}}-1\right)\right),
\end{equation}
where $\tilde{\lambda}=\lambda 2^{r_1}$. By the induction
hypothesis for $k$, we have that
\begin{equation}
\label{eq:A_k:induction_hypothesis_lambda_tilde}
\Pr\left({\bf A}_k{\bf z}_k\geq \tilde{\lambda}\left(2^{{\bf A}_k\boldsymbol{\gamma}}-1\right)\right)\geq
e^{-\tilde{\lambda}\left(2^{\tilde{R}_2}-1\right)}G\left(\tilde{\lambda}\tilde{\alpha}_2\right),
\end{equation}
where $\tilde{R}_2=\displaystyle\sum_{i=2}^{k+1}{r_i}$ and $\tilde{\alpha}_2=\displaystyle\prod_{i=2}^{k+1}{\left(2^{r_i}-1\right)}$.
Combining (\ref{eq:Probability_of_success:A_k+1,reduced}),(\ref{eq:Probability_A_check}),(\ref{eq:A_check_reduced_even_more}) and (\ref{eq:A_k:induction_hypothesis_lambda_tilde}) yields
\begin{equation}
1-\rm{Pr}^{\rm{out}}_{\rm{MAC}_{k+1}}\geq
e^{-\lambda(2^{r_1}-1)}e^{-\tilde{\lambda}\left(2^{\tilde{R}_2}-1\right)}G\left(\tilde{\lambda}\tilde{\alpha}_2\right).
\end{equation}
The claim now follows from the fact that
\begin{equation}
\lambda(2^{r_1}-1)+\tilde{\lambda}\left(2^{\tilde{R}_2}-1\right)=\lambda\left(2^{R_{k+1}}-1\right),
\nonumber
\end{equation}
$\tilde{\lambda}\tilde{\alpha}_2\geq \lambda\alpha_{k+1}$ and that $G(x)$ is monotonically increased with $x$ .
\end{proof}

Note that the lower bound (\ref{ineq:Lower_MAC_Outage_Bound_gamma}) in the i.i.d case may also be expressed as
\begin{equation}
\label{ineq:Lower_MAC_Outage_Bound}
\rm{Pr}^{\rm{out}}_{\rm{MAC}_n}\geq 1-e^{-\lambda\left(2^{R_n}-1\right)}\tilde{G}\left(\lambda\left(2^{R_n}-1-S_n\right)\right),
\end{equation}
where,$\tilde{G}\left(x\right)=\displaystyle\sum_{k=0}^{n-1}{\frac{1}{k!}x^k}$.
Hence,
\begin{equation}
\tilde{G}(\lambda\beta_n)\geq e^{\lambda\left(2^{R_n}-1\right)}\left(1-\rm{Pr}^{\rm{out}}_{\rm{MAC}_n}\right)\geq G(\lambda\alpha_n),
\end{equation}
where $\beta_n=2^{R_n}-1-S_n$.
\section{Rate allocation for the fading MAC network model}
\label{sec:Rate_allocation_for_the_outage_MAC_model}
In this section we study the problem of finding the rate allocation vector for the fading MAC network model discussed in the previous sections. In our wireless model we assume a slow fading model with independent Rayleigh fading channels. For simplicity we only consider a network in which $\lambda_{i,j}=\lambda_j$ for all $j\in\mathcal{V}\backslash\{s\}$. In other words, the network is assumed to be a collection of multiple access channels with i.i.d links (note that $\lambda_j\;\;j\in\mathcal{V}\backslash\{s\}$ may be distinct). This assumption comes down to the fact that we normalized the transmission power of nodes that connected to the same receiver appropriately to the statistics of the best channel. As was mentioned earlier, the case where there are independent links with nonidentical variances is much complicated and is considered as future work.

While it is assumed that the instantaneous channels gain may be available at both the encoders and the decoders, we assume that the rate allocation vector is determined a-priori, based solely on the statistics. The rationale for this assumption is that the rate allocation vector is determined based on network considerations, whereas the instantaneous state of each component of the network varies faster than the entire network can respond to the variations. Note that this assumption is practical as well when the power constraints must be satisfied in each encoding block (e.g. when we are under Federal Communications Commission (FCC) regulations).

Consider the rate allocation graph $\tilde{\mathcal{G}}=\left(\mathcal{V},\mathcal{E},{\bf r}\right)$ (the graph that is obtained by assigning a rate $r_{i,j}$ for each link $(i,j)$ in $\mathcal{G}$). The optimal rate allocation vector that minimizes the probability of outage in the MAC network model while maintaining a multicast rate of $R_s$ is a solution of the following optimization problem:
\begin{subequations}
\label{opt:optimal_MAC_convex}
\begin{align}
\tag{\ref{opt:optimal_MAC_convex}}
&\displaystyle\min_{{\bf f},{\bf r}}
{\;\;\Pr\left(\displaystyle\bigcup_{j\in{\mathcal{V}}}{\left\{{\bf r}_j\notin\mathcal{V}^{\rm{ins}}_j({\bf h}_j)\right\}}\right)}\\
&\qquad\rm{subject\;\;to}\nonumber\\
\label{con:flow_rate_positivity_MAC_convex_optimal}
&0\leq f_{i,j}^{d} \leq r_{i,j} \;\;\;\forall(i,j)\in\mathcal{E},d\in\mathcal{D}_s\\
\label{con:flow_conservation_MAC_convex_optimal}
&\displaystyle\sum_{i\in\mathcal{I}(j)}{f_{i,j}^{d}}-
\displaystyle\sum_{i\in\mathcal{O}(j)}{f_{j,i}^{d}}=\begin{cases}
												0 & j\notin \{s,d\}\\
												R_s & j=d\\
											\end{cases}\\
&\qquad\qquad\qquad\qquad\qquad\forall j\in\mathcal{V}\backslash \{s\},d\in\mathcal{D}_s,\nonumber
\end{align}
\end{subequations}
where the flow constraints (\ref{con:flow_rate_positivity_MAC_convex_optimal})-(\ref{con:flow_conservation_MAC_convex_optimal}) guarantee that any feasible solution of (\ref{opt:optimal_MAC_convex}) provides a minimum min-cut of at least $R_s$ between the source and each destination. Therefore, a multicast rate of $R_s$ is achievable by network coding, see Theorem 1. in \cite{Lun_Minimum_2006}.

Unfortunately, as was mentioned earlier, the computation of the probability of a common outage becomes extremely complicated in a MAC with more than $2$ links. Therefore, we present a suboptimal solution to the rate allocation problem in the fading MAC network model. We relax the problem and instead of using the exact expression of the probability of common outage, we minimize an upper bound on the outage probability of a multiple access channel. To that end, consider the following lemma.
\begin{lemma}
\label{lemma:weaker_bound}
The probability of common outage of a MAC with $n$ i.i.d $\rm{Rayleigh}(\upsilon)$ channels is bounded by
$\rm{Pr}^{\rm{out}}_{\rm{MAC}_n}\leq 1-e^{-\lambda\left(2^{R_n}-1\right)}$.
\end{lemma}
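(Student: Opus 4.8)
The plan is to obtain this weaker bound as an immediate corollary of the sharper estimate already established in Theorem \ref{theorem:MAC_probability_lower_bound}, simply by discarding a factor that is always at least one. Concretely, for $n\geq 3$ I would recall that the induction carried out in the proof of Theorem \ref{theorem:MAC_probability_lower_bound} (with base case Lemma \ref{lemma:MAC_3_bound}) delivers the success-probability lower bound
\begin{equation}
1-\rm{Pr}^{\rm{out}}_{\rm{MAC}_n}\geq e^{-\lambda\left(2^{R_n}-1\right)}G(\lambda\alpha_n),\qquad G(x)=\tfrac{1}{2}x^2+x+1.
\nonumber
\end{equation}
I would then observe that every factor $2^{r_i}-1$ is nonnegative because $r_i\geq 0$, so $\alpha_n=\prod_{i=1}^{n}(2^{r_i}-1)\geq 0$, and since $\lambda>0$ we have $\lambda\alpha_n\geq 0$. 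As all coefficients of $G$ are nonnegative and its constant term equals $1$, it follows that $G(\lambda\alpha_n)\geq G(0)=1$. Substituting this into the displayed inequality yields $1-\rm{Pr}^{\rm{out}}_{\rm{MAC}_n}\geq e^{-\lambda(2^{R_n}-1)}$, which is precisely the asserted bound after rearranging.

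Second, I would dispatch the two cases not covered by Theorem \ref{theorem:MAC_probability_lower_bound}, namely $n=1$ and $n=2$, by direct computation. For $n=1$ the single constraint reads $z_1\geq\lambda(2^{r_1}-1)$, and since $z_1$ is a standard exponential the bound holds with equality. For $n=2$ I would eliminate the two single-link rows of ${\bf A}_2$ with two applications of Lemma \ref{lemma:A_n_has_a_row_with_one_and_zeros} and use the $\rm{Erlang}(2,1)$ tail $\Pr(z_1+z_2>z)=e^{-z}(1+z)$ to obtain
\begin{equation}
1-\rm{Pr}^{\rm{out}}_{\rm{MAC}_2}=e^{-\lambda(2^{R_2}-1)}(1+\lambda\alpha_2),
\nonumber
\end{equation}
which again dominates $e^{-\lambda(2^{R_2}-1)}$ because $\lambda\alpha_2\geq 0$.

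There is no genuine obstacle here: the entire content of the lemma is already contained in the tighter bounds proved above, and the lemma is nothing more than their deliberate weakening. The only point requiring attention is bookkeeping, namely making sure that the small cases $n=1$ and $n=2$ are treated separately, since Theorem \ref{theorem:MAC_probability_lower_bound} is stated only for $n\geq 3$. The purpose of recording this looser form is practical rather than sharp: dropping the polynomial factor $G(\lambda\alpha_n)$ leaves a bound whose negative logarithm is simply $\lambda(2^{R_n}-1)$, a form that is considerably more tractable for the rate-allocation optimization developed in the following section.
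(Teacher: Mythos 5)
Your proposal is correct and follows essentially the same route as the paper's own proof: handle $n=1$ and $n=2$ by direct computation (your $n=2$ expression $e^{-\lambda(2^{R_2}-1)}(1+\lambda\alpha_2)$ is exactly the one the paper obtains), and for $n\geq 3$ invoke Theorem \ref{theorem:MAC_probability_lower_bound} together with the observation that $G(x)\geq 1$ for $x\geq 0$. The only difference is cosmetic: you spell out the elimination argument behind the paper's ``simple computation'' for $n=2$.
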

\begin{proof}
For $n=1$, we have a Rayleigh fading Gaussian channel with outage probability given by:
\begin{equation}
\rm{Pr}^{\rm{out}}_{\rm{MAC}_1}=1-e^{-\lambda \left(2^{r_1}-1\right)}.
\end{equation}
For $n=2$, a simple computation yields that
\begin{align}
1-\rm{Pr}^{\rm{out}}_{\rm{MAC}_2}&=e^{-\lambda\left(2^{r_1+r_2}-1\right)}\left(1+\lambda\left(2^{r_1}-1\right)\left(2^{r_2}-1\right)\right)\nonumber\\
&\geq e^{-\lambda\left(2^{r_1+r_2}-1\right)}.
\end{align}
For $n\geq 3$, the claim follows from Theorem \ref{theorem:MAC_probability_lower_bound} and the fact that for any non-negative $x$, we have $G(x)\geq 1$.
\end{proof}

The probability of an outage in the MAC network model is given in equation (\ref{eq:probability_outage_MAC_network_model}). By assumption, all $h_{i,j}$ are independent of each other. Therefore, the probability of an outage in the MAC network model can be rewritten as
\begin{equation}
\label{eq:probability_outage_MAC_network_model_simplified}
P_{\rm{MAC}}^{\rm{out}}=1-\displaystyle\prod_{j\in\mathcal{V}\backslash\{s\}}{\left(1-P_j^{\rm{out}}\right)}.
\end{equation}
Obviously, if $P_j^{\rm{out}}$ is bounded above by $\tilde{P}_j^{\rm{out}}$ the following holds
\begin{equation}
P_{\rm{MAC}}^{\rm{out}}\leq 1-\displaystyle\prod_{j\in\mathcal{V}\backslash\{s\}}{\left(1-\tilde{P}_j^{\rm{out}}\right)}.
\end{equation}
Although the bound in Lemma \ref{lemma:weaker_bound} is weaker than the one we get from Theorem \ref{theorem:MAC_probability_lower_bound}, we used the weaker bound to find a rate allocation vector for the outage MAC model. For every $j\in\mathcal{V}\backslash\{s\}$ denote
\begin{equation}
\label{eq:r_j_tilde}
\tilde{R}_j:=\displaystyle\sum_{i\in\mathcal{I}(j)}{r_{i,j}}.
\end{equation}
Hence, from Lemma \ref{lemma:weaker_bound} we have
\begin{equation}
P_{\rm{MAC}}^{\rm{out}}\leq 1-\displaystyle\prod_{j\in\mathcal{V}\backslash\{s\}}{e^{-\lambda_j\left(2^{\tilde{R}_j}-1\right)}}.
\end{equation}
Finally, since $e^{-\lambda\left(2^{R_j}-1\right)}$ is log-concave the problem becomes computationally tractable:
\begin{subequations}
\label{opt:suboptimal_MAC_convex}
\begin{align}
\tag{\ref{opt:suboptimal_MAC_convex}}
&\displaystyle\min_{{\bf f},{\bf r}}
{\displaystyle\sum_{j\in\mathcal{V}\backslash \{s\}}
{\lambda_j 2^{\tilde{R}_j}}}\\
&\qquad\rm{subject\;\;to}\nonumber\\
\label{con:flow_rate_positivity_MAC_convex}
&0\leq f_{i,j}^{d} \leq r_{i,j} \;\;\;\forall(i,j)\in\mathcal{E},d\in\mathcal{D}_s\\
\label{con:flow_conservation_MAC_convex}
&\displaystyle\sum_{i\in\mathcal{I}(j)}{f_{i,j}^{d}}-
\displaystyle\sum_{i\in\mathcal{O}(j)}{f_{j,i}^{d}}=\begin{cases}
												0 & j\notin \{s,d\}\\
												R_s & j=d\\
											\end{cases}\\
&\qquad\qquad\qquad\qquad\qquad\forall j\in\mathcal{V}\backslash \{s\},d\in\mathcal{D}_s.\nonumber
\end{align}
\end{subequations}

\section{Distributed solution for MAC network model}
\label{sec:distributed}
In the previous section the rate allocation vector for the MAC network model was given as a solution to the convex optimization problem (\ref{opt:suboptimal_MAC_convex}). This problem can easily be solved by a standard convex optimization technique in a centralized fashion. However, the centralized solution requires full knowledge of the network topology and statistics. In this section we discuss how to distributively solve (\ref{opt:suboptimal_MAC_convex}).

As pointed out in e.g., \citep{Lun_Minimum_2006,Strikant_the_mathematics_2004,Feijer_stability_2010}, if certain conditions are satisfied, convex optimization problems may be solved distributively by a continuous time primal-dual method. This method can be described as follows. The optimization is studied through its Lagrangian where the primal and dual variables are updated simultaneously by a set of gradient laws (dynamic system). These laws define a trajectory in the direction of the respective partial gradients, starting from some initial point. The dynamic system is stated such that the saddle points of the Lagrangian are equilibrium points. Hence, if a strong duality holds for the original convex optimization problem, the algorithm stops updating the variables when it reaches the optimal solution. It is worth mentioning that in contrast to gradient method, in which convergence is guaranteed for convex problems from any initial point (see e.g., \S 9 in \cite{Boyd_Convex_2004}), the asymptotic behavior of dynamic systems is not immediate in the general case (even though the problem is convex). In other words, convergence to an equilibrium point is not guaranteed in the general case. In this type of problem the existence of Lyapunov functions is used to prove the stability of the equilibrium points. There is no general technique for the construction of these functions. However, in some specific cases the construction of Lyapunov functions is known (see e.g.,  \cite{Khalil_Nonlinear_2002}).

As can be easily verified, the cost function in (\ref{opt:suboptimal_MAC_convex}) is not strictly convex (and also is not separable in the decision variables ${\bf f}$ and ${\bf r}$). In problems with a non-strictly convex cost function, it is possible to have more than one optimum point. Hence, in this case the standard primal-dual solution may not converge. In \cite{Feijer_stability_2010} a modified primal dual gradient method was derived for non-strictly convex problems. In that method the solution will converge to one of the optimal points by modifying the constraint set of the convex optimization problem. Following \cite{Feijer_stability_2010}, we suggest the following modified convex optimization problem:
\begin{subequations}
\label{opt:MAC_equivalent}
\begin{align}
\tag{\ref{opt:MAC_equivalent}}
&\displaystyle\min_{{\bf f},{\bf r}}
{\displaystyle\sum_{j\in\mathcal{V}\backslash \{s\}}
{\lambda_j 2^{\tilde{R}_j}}}\\
&\qquad\rm{subject\;\;to}\nonumber\\
&\phi\left(-f_{i,j}^{d}\right)\leq 0\;\;\;\forall(i,j)\in\mathcal{E},d\in\mathcal{D}_s\\
&\phi\left(f_{i,j}^d-r_{i,j}\right)\leq 0\;\;\;\forall(i,j)\in\mathcal{E},d\in\mathcal{D}_s\\
\label{con:modified_flow_conservation_upper_side}
&\phi\left(q_j^d\right)\leq 0\qquad\forall j\in\mathcal{V}\backslash \{s\},d\in\mathcal{D}_s\\
\label{con:modified_flow_conservation_lower_side}
&\phi\left(-q_j^d\right)\leq 0\qquad\forall j\in\mathcal{V}\backslash \{s\},d\in\mathcal{D}_s,
\end{align}
\end{subequations}
where $\tilde{R}_j$ was defined in (\ref{eq:r_j_tilde}), $\phi(x)=e^x-1$
and for all $j\in\mathcal{V}\backslash \{s\},d\in\mathcal{D}_s$
\begin{align}
&q_j^d:=\displaystyle\sum_{i\in\mathcal{I}(j)}{f_{i,j}^{d}}-
\displaystyle\sum_{i\in\mathcal{O}(j)}{f_{j,i}^{d}}-\psi_j^d\\
&\psi_j^d:=\begin{cases}
	0 & j\notin \{s,d\}\\
	R_s & j=d
\end{cases}.
\end{align}
Note that theorem 11 in \cite{Feijer_stability_2010} that guarantees convergence for the corresponding dynamic system was proved under the assumption that Slater's condition holds for the modified convex optimization problem. It is easy to see that Slater's condition does not hold for (\ref{opt:MAC_equivalent}). However, it can be verified that their proofs remain valid as they are under any other constraint qualification (i.e., whenever strong duality for the modified optimization holds). In Appendix \ref{appendix:qualification}, we show that strong duality holds for (\ref{opt:MAC_equivalent}). Denote by $\rho_{i,j}^d,w_{i,j}^d,\varphi_j^d$ and $\mu_j^d$ the dual variables of
(\ref{opt:MAC_equivalent}) and define
\begin{equation}
\Delta_{i,j}^d:=-\varphi_j^d e^{q_j^d}+I_{i\neq s}\varphi_i^d e^{q_i^d}+\mu_j^d e^{-q_j^d}-I_{i\neq s}\mu_i^d e^{-q_i^d},
\end{equation}
where $I_{i\neq s}=1$ if $i\neq s$ and zero otherwise. The primal-dual gradient laws for (\ref{opt:MAC_equivalent}) are given by
\begin{align}
\label{dynamic_begin}
&\dot{r}_{i,j}=\tau_{i,j}
\left(-\lambda_j e^{\tilde{R}_j}+\displaystyle\sum_{d\in\mathcal{D}_s}{w_{i,j}^d e^{f_{i,j}^d-r_{i,j}}}\right)\\
&\dot{f}_{i,j}^d=k_{i,j}^d\left(\rho_{i,j}^d e^{-f_{i,j}^d}-w_{i,j}^d e^{f_{i,j}^d-r_{i,j}}+\Delta_{i,j}^d\right)\\
&\dot{\rho}_{i,j}^d=\alpha_{i,j}^d\left[e^{-f_{i,j}^d}-1\right]_{\rho_{i,j}^d}^{+}\\
&\dot{w}_{i,j}^d=\theta_{i,j}^d \left[e^{f_{i,j}^d-r_{i,j}}-1\right]_{w_{i,j}^d}^{+}\\
&\dot{\varphi}_j^d=\beta_j^d\left[e^{q_{j}^d}-1\right]_{\varphi_j^d}^{+}\\
\label{dynamic_end}
&\dot{\mu}_j^d=\gamma_j^d\left[e^{-q_{j}^d}-1\right]_{\mu_j^d}^{+},
\end{align}
where $\tau_{i,j},k_{i,j}^d,\theta_{i,j}^d,\alpha_{i,j}^d,\beta_j^d$ and $\gamma_j^d$ are some positive scalars and for any two scalars $x$ and $p$
\begin{equation}
[x]_{p}^{+}=\begin{cases}
	0 & x<0,p<0 \\
	x & \rm{otherwise}.
\end{cases}
\nonumber
\end{equation}
The dynamic (\ref{dynamic_begin})-(\ref{dynamic_end}) can be distributively implemented by associating a processor for each node in the network, excluding the source node. Each node $j$'s processor keeps track of the variables $\varphi_j^d$ and $\mu_j^d$ as well as the variables associated with node $j$'s ingoing links (i.e., the links in $\{(i,j): i\in\mathcal{I}(j)\}$). Note that message passing is required only between direct neighbors.

\section{Simulation results}
\label{sec:simulation}
In this section the probability of outage of the suboptimal algorithm for the MAC network model is presented. In the simulation we consider the networks shown in Fig. \ref{fig:MAC_network_model} where it was assumed that all links are i.i.d $\rm{Rayleigh}(\upsilon)$ channels, with $\upsilon=1$. We solved (\ref{opt:suboptimal_MAC_convex}) for various values of $\SNR=\frac{P}{\sigma^2}$ and the results are shown in Fig. \ref{fig:resq_example_OutageProbability_VS_MulticastRate_fullPaperNetwork}.
The lower and upper bounds for the outage probability were obtained by calculating (\ref{ineq:Lower_MAC_Outage_Bound_gamma}) and (\ref{ineq:Upper_MAC_Outage_Bound}) (respectively) for each MAC associated with node $j$ if $\mathcal{I}(j)\geq 3$ and the exact expression of the outage probability for each receiver $j$ with $\mathcal{I}(j)\leq 2$. As can be seen, up to $6$ bits/sec/Hz the bounds are quite tight.

We compared the performance of the fading MAC network model to the performance of the non-naive TDMA model. The optimal rate allocation scheme that minimizes the probability of outage for the TDMA based model was derived in \cite{Zanko_Network_2013Submeeted}. The results for the non-naive TDMA model are shown in Fig. \ref{fig:NonNaiveOutageProbability_VS_MulticastRate_fullPaperNetwork}. Although for low multicast rate demands there is no significant gain in preferring the MAC network model over the Non-naive TDMA model, when we have a demand for a high multicast rate the MAC network model outperforms the TDMA based model. To emphasize this result, note that in the non-naive TDMA model (see Fig. \ref{fig:NonNaiveOutageProbability_VS_MulticastRate_fullPaperNetwork}) we achieved a multicast rate of $R_s\approx 5.5$ bits/sec/Hz with probability of outage of $P^{out}\approx 0.1$ with $\SNR=30dB$ whereas we obtained the same results (i.e., $R_s\approx 5.5$, $P^{out}\approx 0.1$) in the MAC network model with $\SNR=25dB$.

Finally, we simulated a discrete time version of the distributed algorithm shown in section \ref{sec:distributed}. In this version we consider time steps $m=1,2,\ldots$ and the derivatives were replaced by differences. The scalars $\tau_{i,j},k_{i,j}^d,\theta_{i,j}^d,\alpha_{i,j}^d,\beta_j^d$ and $\gamma_j^d$ can be thought as step sizes. We did not optimized these step sizes and they were randomly chosen at the initiation of the simulation ($\tau_{i,j}$ and $k_{i,j}^d$ where about $10$ times larger than the other step sizes). During the simulation, we considered the network sown in Fig. \ref{fig:MAC_network_model} and it was assumed that node $5$ was the source node and that the destinations were $\mathcal{D}_s=\{1,4,8,10\}$. The convergence of the algorithm is shown in Fig. \ref{fig:Convergence_of_distributed}.

\begin{figure}[htbp]
	\centering
\includegraphics[width=\textwidth]{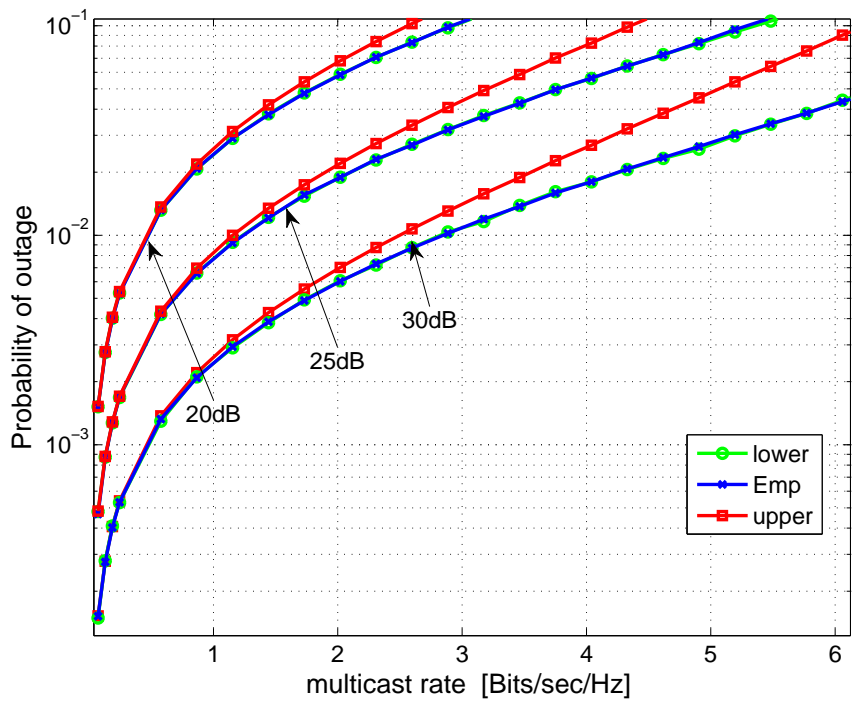}
\caption{The probability of outage of the fading MAC network model for various values of SNR for the network shown in Fig. \ref{fig:MAC_network_model}. It was assumed that node $5$ was the source node and that the destinations were $\mathcal{D}_s=\{1,4,8,10\}$.}
	\label{fig:resq_example_OutageProbability_VS_MulticastRate_fullPaperNetwork}
\end{figure}

\begin{figure}[htbp]
	\centering
\includegraphics[width=\textwidth]{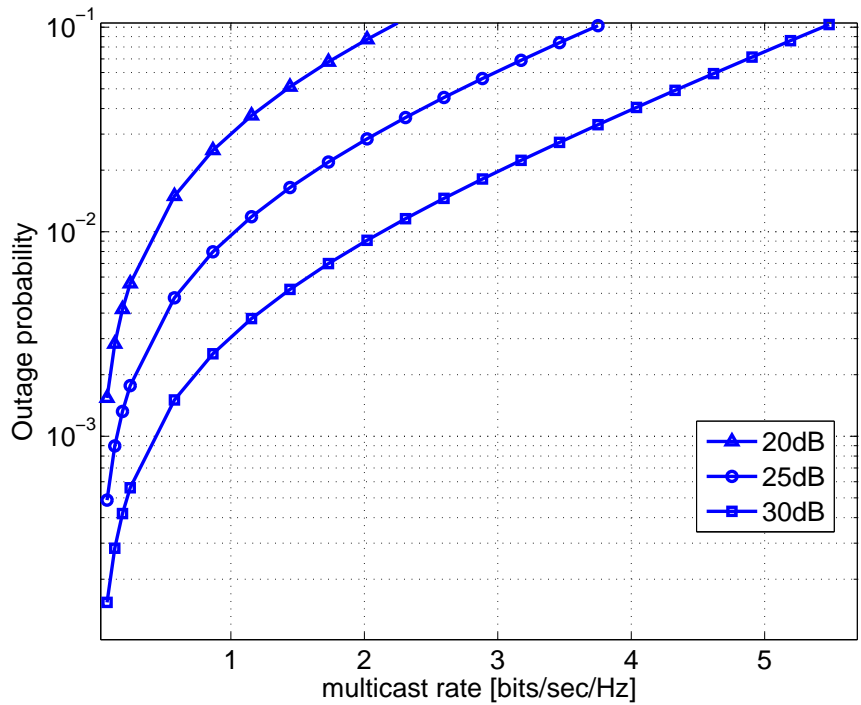}
	\caption{The probability of outage in the non-naive TDMA for various values of SNR for the network shown in Fig. \ref{fig:MAC_network_model}. It was assumed that node $5$ was the source node and that the destinations were $\mathcal{D}_s=\{1,4,8,10\}$.}
	\label{fig:NonNaiveOutageProbability_VS_MulticastRate_fullPaperNetwork}
\end{figure}

\begin{figure}[htbp]
	\centering
\includegraphics[width=\textwidth]{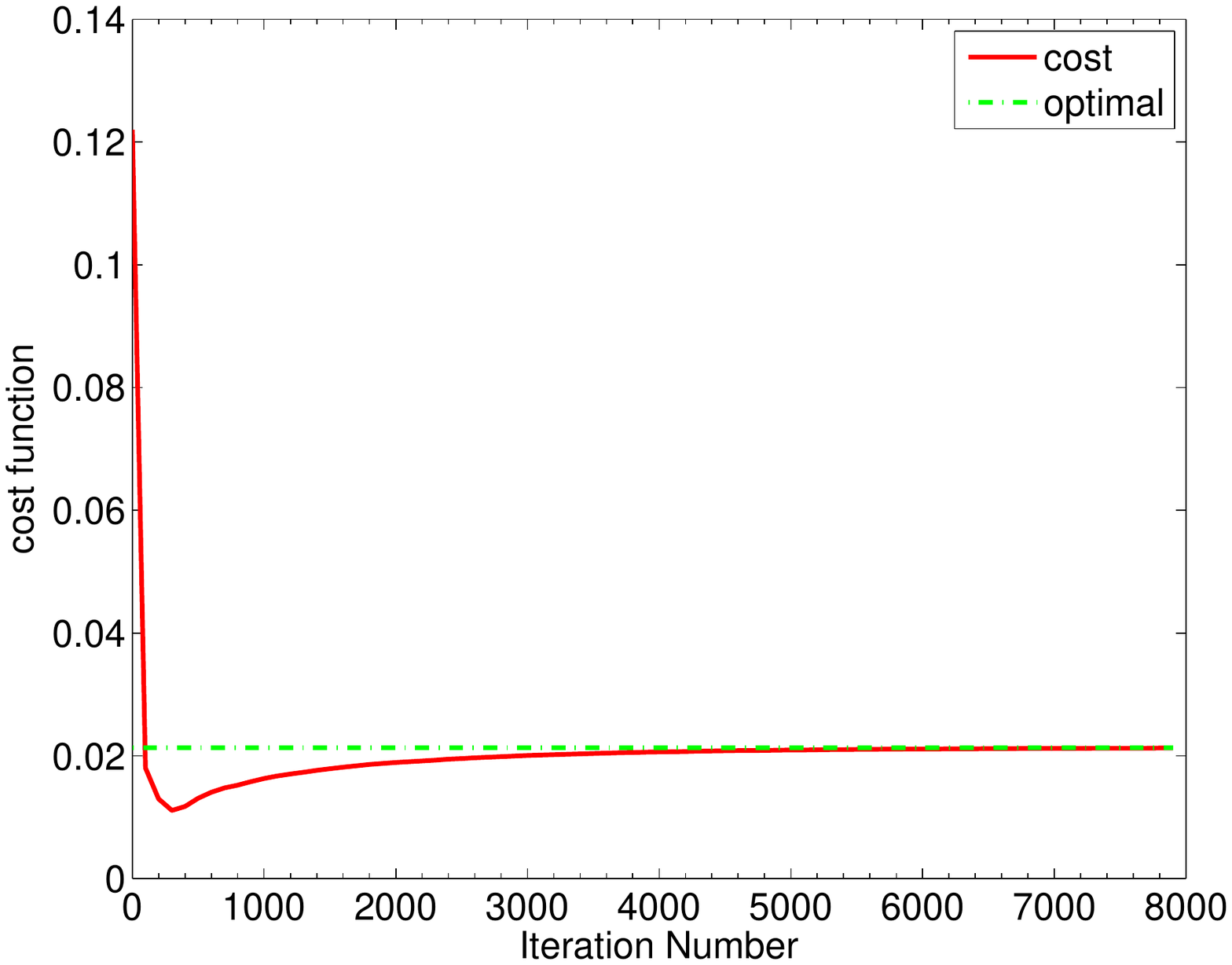}
	\caption{The convergence of the distributed algorithm for the network shown in Fig. \ref{fig:MAC_network_model}. It was assumed that node $5$ was the source node and that the destinations were $\mathcal{D}_s=\{1,4,8,10\}$. The dashed line represents the optimal solution of (\ref{opt:suboptimal_MAC_convex}) and the solid line represents the value of the cost function in (\ref{opt:suboptimal_MAC_convex}) over time.}
	\label{fig:Convergence_of_distributed}
\end{figure}

 \section{Conclusions}
\label{sec:conclusions}
In this paper we studied the rate allocation problem for multicasting over slow Rayleigh fading channels using network coding. A rate allocation scheme based solely on the statistics of the channels was presented.
In the MAC network model, where the network is treated as a collection of slowly Rayleigh fading multiple access channels, we proposed a suboptimal scheme as the solution to a convex optimization problem. This suboptimal solution is based on an upper bound on the probability of outage of a fading multiple access channel. A primal-dual gradient algorithm was derived to solve the problem distributively. In the simulation results, it is shown that the MAC network model outperforms the TDMA based model. The paper provides a practical solution to networks with slow fading channels in which long delays are unacceptable (e.g., in video streaming), with the objective of minimizing outage events throughout the network. As a potential future works, one should consider to derive a bound on the outage probability in the non i.i.d case and extend the problem of statistic-based rate allocation scheme to deal with other than Rayleigh fading model (e.g., Rician or Nakagami models).
\appendix
\section{A recursion for computing the joint distribution of linear combinations of exponential random variables}
\label{appendix:recursion}
In this section we present a new version of a recursion that appeared as equation (17) in \cite{Huffer_Computing_2001}. The lemma in \cite{Huffer_Computing_2001} gives a recursion for the computation of the joint distribution of linear combinations of spacings of uniform distribution. The authors in \cite{Huffer_Computing_2001} remarked that the recursion remain valid as well for computing the joint distribution of linear combinations of exponential random variables. This is inaccurate and we revise the result to handle exponential random variables. To that end, we need the following notations.
Let $\Psi=\{i_1,i_2,\ldots,i_k\}\;\;k\geq 1$ be a set of indices of columns of a matrix $\bf A$ such that $i_{\ell}<i_{\ell+1}$ for all $\ell$ and let ${\bf A}_{-\Psi(m)}$ denote the submatrix of $A$ constructed by deleting the columns of $\bf A$ indexed by $\{i_1,i_2,\ldots,i_m\}$.
\begin{lemma}
\label{lemma:Huffer_computing_2001}
Let $z_1,z_2,\cdots,z_{_{N+1}}$ be $(N+1)$ i.i.d exponential random variables with expectation $E\{z_i\}=1$. Let $\Psi=\{i_1,i_2,\ldots,i_k\}$, $k\geq 1$ be a set of indices of identical columns of matrix $\bf A$ (without loss of generality $i_{\ell}<i_{\ell+1}$ for all $\ell$). If there exists a row $r$ in $\bf A$ such that a) $a_{r,i}>0$ for $i\in\Psi$, b) $a_{r,i}=0$ for $i\notin\Psi$ and c) $b_r\geq 0$, the following recursion holds
\begin{equation}
\label{eq:Huffer_computing_2001}
\Pr\left({\bf Az}>\lambda{\bf b}\right)=\displaystyle\sum_{m=0}^{k-1}{
\frac{1}{m!}(\lambda\delta )^m e^{-\lambda\delta}\Pr\left({\bf A}^{r*}_{-\Psi(m)}{\bf z}>\lambda{\bf c}\right)},
\end{equation}
where $\delta=\frac{b_r}{a_{r,i_1}}$ and ${\bf c}={\bf b}^{r*}-\delta{\bf a}_{i_1}^{r*}$.
\end{lemma}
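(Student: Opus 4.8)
The plan is to exploit two structural consequences of the hypotheses and then reduce everything to an Erlang/Poisson-process calculation. First I would observe that because the columns indexed by $\Psi$ are \emph{identical}, in every row $s\neq r$ the variables $z_{i_1},\ldots,z_{i_k}$ enter the constraint $({\bf Az})_s>\lambda b_s$ only through their sum $S:=z_{i_1}+\cdots+z_{i_k}$; indeed $\sum_{i\in\Psi}a_{s,i}z_i=a_{s,i_1}S$ since $a_{s,i}=a_{s,i_1}$ for every $i\in\Psi$. Second, hypotheses (a) and (b) say that row $r$ involves only the $\Psi$-columns and has positive, identical entries there, so the row-$r$ constraint $({\bf Az})_r>\lambda b_r$ is exactly $S>\lambda b_r/a_{r,i_1}=\lambda\delta$, with $\delta\geq 0$ by (c). Thus the event ${\bf Az}>\lambda{\bf b}$ is the conjunction of the scalar event $\{S>\lambda\delta\}$ with the remaining event $\{{\bf A}^{r*}{\bf z}>\lambda{\bf b}^{r*}\}$, in which the $\Psi$-variables appear only through $S$.

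Next I would introduce the Poisson representation of $S$. Since $z_{i_1},\ldots,z_{i_k}$ are i.i.d.\ standard exponentials, their partial sums are the first $k$ arrival times of a rate-$1$ Poisson process and $S$ is its $k$-th arrival time. Let $N$ be the number of arrivals in $[0,\lambda\delta]$; then $\{S>\lambda\delta\}=\{N\leq k-1\}$ and $\Pr(N=m)=\frac{1}{m!}(\lambda\delta)^m e^{-\lambda\delta}$. The key point, justified by the memorylessness (strong Markov) property, is that conditionally on $\{N=m\}$ one has $S=\lambda\delta+S'$, where $S'$ is the $(k-m)$-th arrival time of a fresh rate-$1$ process, i.e.\ $S'=z'_1+\cdots+z'_{k-m}$ for i.i.d.\ standard exponentials $z'_1,\ldots,z'_{k-m}$ that are independent of the non-$\Psi$ variables; the latter are untouched by conditioning on $N$, which is a function of the $\Psi$-variables alone.

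I would then partition over $N$: writing $\{S>\lambda\delta\}$ as the disjoint union $\bigcup_{m=0}^{k-1}\{N=m\}$ gives $\Pr({\bf Az}>\lambda{\bf b})=\sum_{m=0}^{k-1}\Pr(N=m)\,\Pr\big({\bf A}^{r*}{\bf z}>\lambda{\bf b}^{r*}\mid N=m\big)$. Substituting $S=\lambda\delta+S'$ into row $s$ of ${\bf A}^{r*}$ turns its constraint into $a_{s,i_1}S'+\sum_{j\notin\Psi}a_{s,j}z_j>\lambda(b_s-\delta a_{s,i_1})$, whose right-hand side is exactly $\lambda c_s$ with ${\bf c}={\bf b}^{r*}-\delta{\bf a}^{r*}_{i_1}$. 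Since $a_{s,i_1}S'$ splits as $(k-m)$ copies of the common coefficient $a_{s,i_1}$ multiplying fresh standard exponentials, the conditional event has the same law as ${\bf A}^{r*}_{-\Psi(m)}{\bf z}>\lambda{\bf c}$ on a fresh i.i.d.\ standard-exponential vector: the matrix ${\bf A}^{r*}_{-\Psi(m)}$ retains $k-m$ copies of the repeated column for the $z'$'s together with all non-$\Psi$ columns, matching the $N+1-m$ surviving variables. Multiplying by $\Pr(N=m)=\frac{1}{m!}(\lambda\delta)^m e^{-\lambda\delta}$ and summing gives the stated recursion.

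The main obstacle I anticipate is the rigorous justification of the conditional step: one must verify that conditioning on $N=m$ leaves the non-$\Psi$ variables i.i.d.\ standard exponential and independent of the post-$\lambda\delta$ increments, and that the residual $S'$ is genuinely Erlang$(k-m,1)$ rather than merely having the correct mean. This is exactly where memorylessness of the exponential does the work, and where the \emph{identical} (not merely proportional) columns are essential. It is also the point at which the spacings argument of Huffer and Lin needed correction, since uniform spacings carry the normalizing constraint $\sum s_i=1$ that independent exponentials do not. As a consistency check I would specialize to $k=1$: the sum collapses to the single term $e^{-\lambda\delta}\Pr({\bf A}^{r*}{\bf z}>\lambda{\bf c})$, recovering Lemma \ref{lemma:A_n_has_a_row_with_one_and_zeros}.
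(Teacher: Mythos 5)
Your proof is correct and takes essentially the same route as the paper's: your events $\{N=m\}$ are exactly the paper's partition $D_m=\{\sum_{\ell=1}^{m}z_{\ell}\leq\lambda\delta\leq\sum_{\ell=1}^{m+1}z_{\ell}\}$, your strong-Markov/memorylessness step is the paper's argument that ${\bf T}^m$ conditioned on $D_m$ is again an i.i.d.\ exponential vector, and your Poisson weights $\frac{1}{m!}(\lambda\delta)^m e^{-\lambda\delta}$ coincide with the paper's Erlang computation of $\Pr(D_m)$. The Poisson-process phrasing and the collapsing of the $\Psi$-variables into $S$ before re-splitting are only cosmetic repackaging of the same decomposition, not a different argument.
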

\begin{proof}
As pointed out in \cite{Huffer_Computing_2001}, since the expression ${\bf Az}>\lambda{\bf b}$ stands for a conjunction of inequalities involving i.i.d random variables, $\Pr\left({\bf Az}>\lambda{\bf b}\right)=\Pr\left({\boldsymbol \pi}{\bf Az}>\lambda{\boldsymbol \pi}{\bf b}\right)$ and $\Pr\left({\bf Az}>\lambda{\bf b}\right)=\Pr\left({\bf A}{\boldsymbol \pi}{\bf z}>\lambda{\boldsymbol \pi}{\bf b}\right)$ hold for any permutation matrix ${\boldsymbol \pi}$ with the appropriate dimensions. Therefore, without loss of generality, we assume that $r=1$ and $\Psi=\{1,2,\ldots,k\}$ (See the illustration of such a matrix in Fig. \ref{fig:R1_R3_A_matrix}).
\begin{figure}[htbp]
	\centering
		\includegraphics[width=0.8\textwidth]{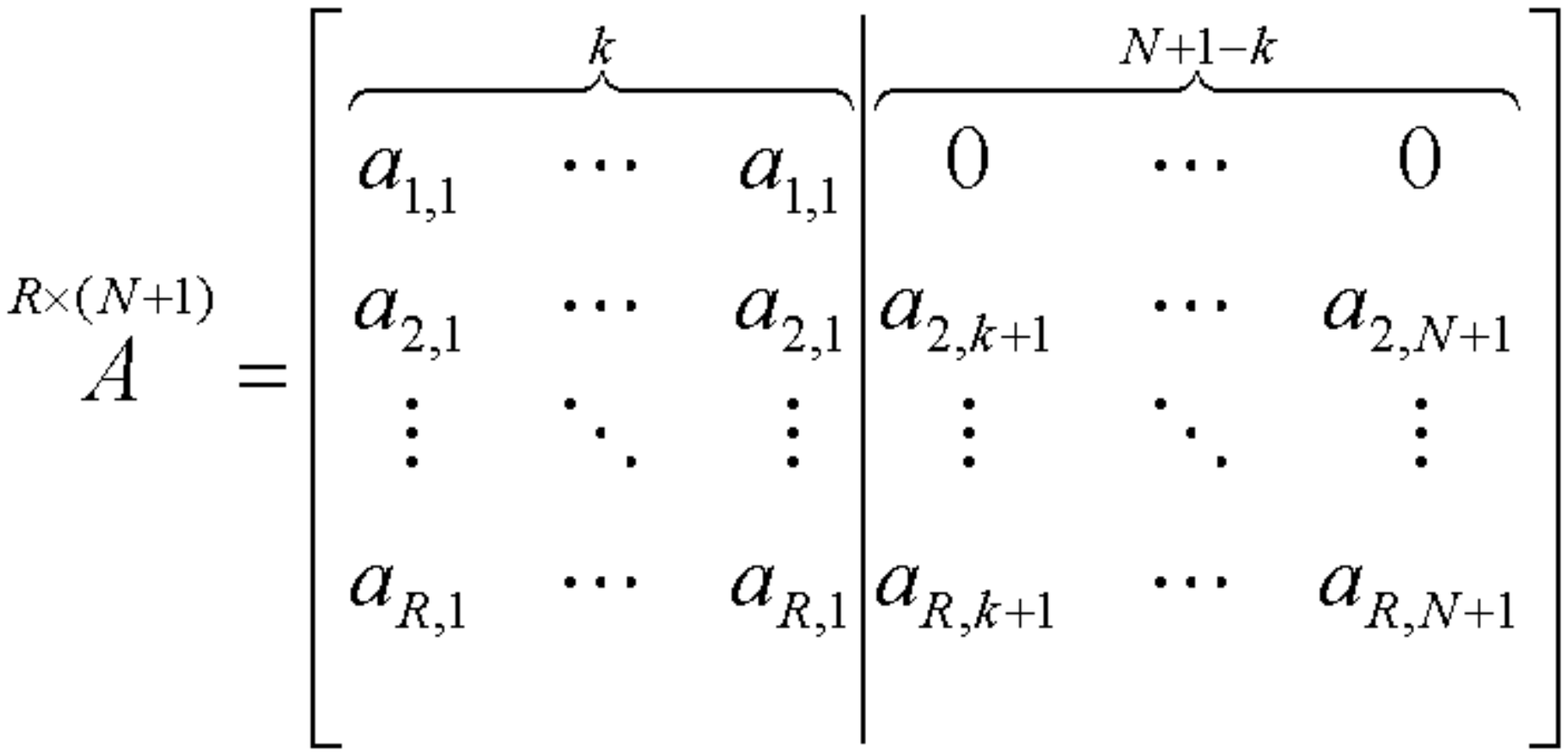}
	\caption{An illustration of a matrix $\bf A$ that satisfies the assumptions in Lemma \ref{lemma:Huffer_computing_2001}, where $\Psi=\{1,2,\ldots,k\}$ and $r=1$.}
	\label{fig:R1_R3_A_matrix}
\end{figure}
Under these assumptions, the first inequality in ${\bf Az}>\lambda{\bf b}$ is
\begin{equation}
D=\{a_{1,1}\displaystyle\sum_{\ell=1}^{k}{z_{\ell}}>\lambda b_1\}.
\end{equation}
Clearly,
\begin{equation}
\Pr\left({\bf Az}>\lambda{\bf b}\right)=\Pr\left(D\cap\{{\bf A}^*{\bf z}> \lambda{\bf b}^*\}\right).
\end{equation}
The event $D$ can be written as the union of disjoint events $D=\displaystyle\cup_{m=0}^{k-1}{D_m}$, where
\begin{equation}
D_m=\{\displaystyle\sum_{\ell=1}^{m}{z_{\ell}}\leq\lambda\delta\leq \displaystyle\sum_{\ell=1}^{m+1}{z_{\ell}}\}.
\end{equation}
Therefore,
\begin{equation}
\label{eq:Law_of_total_probability_for_MAC_events}
\Pr\left({\bf Az}>\lambda{\bf b}\right)=\displaystyle\sum_{m=0}^{k-1}
{\Pr\left(D_m\right)\Pr\left({\bf A}^*{\bf z}>\lambda {\bf b}^*|D_m\right)}.
\end{equation}
For $m<k$, the $r$th inequality in ${\bf Az}>\lambda{\bf b}$ can be rewritten as
\begin{align}
\label{ineq:R1_R3_inequalities}
&\{\displaystyle\sum_{\ell=1}^{N+1}{a_{r,\ell}z_{\ell}>\lambda b_r}\}=
\{\displaystyle\sum_{\ell=1}^{m+1}{a_{r,\ell}z_{\ell}}+
  \displaystyle\sum_{\ell=m+2}^{N+1}{a_{r,\ell}z_{\ell}>\lambda b_r}\}=\nonumber\\
&\{a_{r,1}\displaystyle\sum_{\ell=1}^{m+1}{z_{\ell}}+
  \displaystyle\sum_{\ell=m+2}^{N+1}{a_{r,\ell}z_{\ell} - a_{r,1}\lambda\delta  >\lambda b_r - a_{r,1}\lambda\delta }\}=\nonumber\\
& \{a_{r,1}\left(\displaystyle\sum_{\ell=1}^{m+1}{z_{\ell}}-\lambda\delta\right)+
  \displaystyle\sum_{\ell=m+2}^{N+1}{a_{r,\ell}z_{\ell}} -  >\left(b_r - \lambda\delta a_{r,1}\right)\}=\nonumber\\
&\{\left[a_{r,1},a_{r,m+2},\cdots,a_{_{r,N+1}}\right]^T {\bf T}^m>\left(b_r - \lambda\delta a_{r,1}\right)\},
\end{align}
where
\begin{equation}
{\bf T}^m=\left[\displaystyle\sum_{\ell=1}^{m+1}{z_{\ell}}-\lambda\delta,
z_{m+2},\cdots,z_{N+1}\right]^T.
\end{equation}
Therefore,
\small
\begin{equation}
\label{eq:lower_dimensionality_of_A}
\Pr\left({\bf A}^*{\bf z}>\lambda{\bf b}^*|D_m\right)=
\Pr\left({\bf A}_{(-m)}^*{\bf T}^m>\lambda\left({\bf b}^*-\delta{\bf a}^*\right)|D_m\right)
\end{equation}

\normalsize
In the following we show that the event ${\bf T}^m|D_m$ has the same distribution as $(n+1-m)$ i.i.d exponential random variables. Obviously, $D_m$ is independent with $\left({\bf T}^{m}\right)^*$. Therefore, in order to show that ${\bf T}^m|D_m$ has the same distribution as $(n+1-m)$ i.i.d exponential random variables, it suffices to show that $\displaystyle\sum_{\ell=1}^{m+1}{z_{\ell}}-\lambda\delta |D_m$ has the same distribution as $z_{m+1}$. Note that
\small
\begin{align}
&\left\{\displaystyle\sum_{\ell=1}^{m+1}{z_{\ell}}-\lambda\delta |D_m\right\}=
\left\{\displaystyle\sum_{\ell=1}^{m+1}{z_{\ell}}-\lambda\delta|0\leq \lambda\delta -\displaystyle\sum_{\ell=1}^{m}{z_{\ell}}\leq z_{m+1}\right\}.
\end{align}
\normalsize
Therefore,
\begin{align}
&\Pr\left(\displaystyle\sum_{\ell=1}^{m+1}{z_{\ell}}-\lambda\delta >x|0\leq \lambda\delta -\displaystyle\sum_{\ell=1}^{m}{z_{\ell}}\leq z_{m+1}\right)=\\
\label{eq:before_memory_property}
&\Pr\left(z_{m+1}>x-\displaystyle\sum_{\ell=1}^{m}{z_{\ell}}+\lambda\delta|0\leq \lambda\delta-\displaystyle\sum_{\ell=1}^{m}{z_{\ell}}\leq z_{m+1}\right).
\end{align}
Due to the memoryless property\footnote{The memoryless property of an exponential variable means that for any $a,b\geq 0$, we have that $\Pr\left(Z>a+b|Z>a\right)=\Pr\left(Z>b\right)$, where $Z$ is exponential random variable.} of the exponential random variable we can write
\begin{align}
&\Pr\left(z_{m+1}>x-\displaystyle\sum_{\ell=1}^{m}{z_{\ell}}+\lambda\delta|z_{m+1}\geq \lambda\delta-\displaystyle\sum_{\ell=1}^{m}{z_{\ell}}\geq 0\right)=\\
&\Pr\left(z_{m+1}>x-\displaystyle\sum_{\ell=1}^{m}{z_{\ell}}+\lambda\delta -\left(\lambda\delta-\displaystyle\sum_{\ell=1}^{m}{z_{\ell}}\right)\right)=\\
&\Pr\left(z_{m+1}>x\right).
\end{align}
Therefore, (\ref{eq:lower_dimensionality_of_A}) becomes
\begin{equation}
\label{eq:lower_dimensionality_A}
\Pr\left({\bf A}^*{\bf z}>\lambda{\bf b}^*|D_m\right)=
\Pr\left({\bf A}^*_{(-m)}{\bf z}>\lambda\left({\bf b}^*-\delta{\bf a}^*\right)\right).
\end{equation}
Combining (\ref{eq:Law_of_total_probability_for_MAC_events}) and (\ref{eq:lower_dimensionality_A}) yields
\begin{equation}
\Pr\left({\bf Az}>\lambda{\bf b}\right)=\displaystyle\sum_{m=0}^{k-1}
{\Pr\left(D_m\right)\Pr\left({\bf A}^*_{(-m)}{\bf z}>\lambda\left({\bf b}^*-\delta{\bf a}^*\right)\right)}.
\end{equation}
In order to complete the recursion we need an explicit expression of $\Pr(D_m)$.
\begin{align}
\Pr(D_m)&=\Pr\left(\displaystyle\sum_{\ell=1}^{m}{z_{\ell}}\leq\lambda\delta\leq \displaystyle\sum_{\ell=1}^{m+1}{z_{\ell}}\right)\\
&=1-\Pr\left(\displaystyle\sum_{\ell=1}^{m}{z_{\ell}}>\lambda\delta  \right)-\Pr\left(\lambda\delta> \displaystyle\sum_{\ell=1}^{m+1}{z_{\ell}} \right)\\
&=\Pr\left(\displaystyle\sum_{\ell=1}^{m}{z_{\ell}}\leq\lambda\delta \right)-\Pr\left( \displaystyle\sum_{\ell=1}^{m+1}{z_{\ell}}\leq \lambda\delta  \right).
\end{align}
Since $\displaystyle\sum_{\ell=1}^{m}{z_{\ell}}$ follows the $\rm{Erlang}(m,1)$ distribution\footnote{The CDF of Erlang$(m,\lambda)$ distributed random variable $Y$ is given by $\;\;\;F_Y(y)=1-\displaystyle\sum_{\ell=0}^{m-1}{\frac{1}{\ell!}(\lambda y)^{\ell} e^{-\lambda y}}$.}, we have that
\begin{equation}
\Pr(D_m)=\frac{1}{m!}(\lambda\delta)^m e^{-\lambda\delta}.
\end{equation}
The claim now follows.
\end{proof}

\section{Strong duality of (\ref{opt:MAC_equivalent})}
\label{appendix:qualification}
\begin{lemma}
\label{lemma:modified_slater}
Strong duality holds for (\ref{opt:MAC_equivalent}).
\end{lemma}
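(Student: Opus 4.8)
The plan is to show that, although Slater's condition fails for (\ref{opt:MAC_equivalent}), its feasible set is in fact a polyhedron, and then to produce Lagrange multipliers certifying a zero duality gap (weak duality always holds, so it suffices to exhibit a dual-feasible point attaining the primal optimum). The starting observation is that $\phi(x)=e^{x}-1$ is strictly increasing with $\phi(x)\le 0\iff x\le 0$. Consequently every $\phi$-constraint is equivalent to an affine one: $\phi(-f_{i,j}^d)\le 0\iff f_{i,j}^d\ge 0$, $\phi(f_{i,j}^d-r_{i,j})\le 0\iff f_{i,j}^d\le r_{i,j}$, and the pair $\phi(q_j^d)\le 0,\ \phi(-q_j^d)\le 0$ is equivalent to the affine equality $q_j^d=0$, i.e.\ to flow conservation. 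Hence the feasible set of (\ref{opt:MAC_equivalent}) coincides with that of (\ref{opt:suboptimal_MAC_convex}), which is a polyhedron $F$; assuming the multicast rate $R_s$ is achievable, $F$ is nonempty.

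First I would establish that the infimum is attained. The objective $f_0=\sum_{j\in\mathcal{V}\backslash\{s\}}\lambda_j 2^{\tilde R_j}$ is convex, continuous and nonnegative, and on any sublevel set the sums $\tilde R_j=\sum_{i}r_{i,j}$ are bounded; together with $0\le f_{i,j}^d\le r_{i,j}$ this makes the sublevel sets closed and bounded, so a minimizer $x^\star=(f^\star,r^\star)\in F$ exists.

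Next I would invoke the polyhedral (affine) constraint qualification. For a convex objective subject only to affine (in)equalities the KKT conditions are necessary and sufficient for optimality with no interior-point hypothesis; this is exactly the situation for (\ref{opt:suboptimal_MAC_convex}). Thus at $x^\star$ there exist multipliers for the affine description: nonnegative multipliers for $f_{i,j}^d\ge 0$ and $f_{i,j}^d\le r_{i,j}$, and free-sign multipliers $\nu_j^d$ for the equalities $q_j^d=0$, satisfying stationarity and complementary slackness.

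The crux, and the step I expect to be the main obstacle, is transferring these multipliers to the $\phi$-formulation, whose constraint functions are not affine. Here the key identity is $\phi'(0)=1$: whenever a constraint $\phi(g(x))\le 0$ is active one has $g(x^\star)=0$, so its gradient equals $\phi'(0)\,\nabla g(x^\star)=\nabla g(x^\star)$, the very gradient of the affine constraint; inactive constraints carry zero multiplier by complementary slackness and contribute nothing. For the flow-conservation pair I would split $\nu_j^d=\varphi_j^d-\mu_j^d$ with $\varphi_j^d=\max(\nu_j^d,0)\ge 0$ and $\mu_j^d=\max(-\nu_j^d,0)\ge 0$, since their combined gradient contribution is $(\varphi_j^d-\mu_j^d)\nabla q_j^d$. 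With these choices the affine stationarity and complementary-slackness conditions become precisely the KKT conditions of (\ref{opt:MAC_equivalent}). Finally, because (\ref{opt:MAC_equivalent}) is convex, this KKT point is a saddle point of the Lagrangian: $x^\star$ minimizes $L(\cdot,\lambda^\star)$ by stationarity, while $L(x^\star,\lambda)\le L(x^\star,\lambda^\star)=f_0(x^\star)$ for all $\lambda\ge 0$ by feasibility and complementary slackness. Hence the dual optimum equals $f_0(x^\star)=p^\star$, i.e.\ strong duality holds.
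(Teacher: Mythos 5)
Your proof is correct, but it takes a genuinely different route from the paper's. The paper introduces an intermediate problem (\ref{opt:MAC_equivalent_modified}) that keeps the $\phi$-form inequalities on $f_{i,j}^d$ but writes flow conservation as affine equalities; it invokes the refined Slater condition there, and then argues by contradiction that the Lagrangian dual of (\ref{opt:MAC_equivalent}) cannot differ: any minimizer of $L({\bf x},\boldsymbol{\zeta})$ violating flow conservation would have some $\phi(\pm q_j^d)>0$, letting the dual value be driven to infinity in contradiction with weak duality against the feasible primal, while $L=L_M$ on the flow-conserving set. You instead collapse all constraints to the fully affine problem (\ref{opt:suboptimal_MAC_convex}), establish primal attainment by compactness of sublevel sets, obtain KKT multipliers from the linear (affine) constraint qualification, and transfer them to the $\phi$-formulation via the observation that $\phi'(0)=1$ at active constraints together with complementary slackness, splitting the free equality multipliers as $\nu_j^d=\varphi_j^d-\mu_j^d$. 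Your route buys two things the paper's does not make explicit: (i) it produces explicit dual-optimal multipliers, i.e.\ dual attainment, which is precisely what the dynamics (\ref{dynamic_begin})--(\ref{dynamic_end}) need as equilibrium points, not merely a zero gap; and (ii) it avoids any strict-feasibility hypothesis --- note that the paper's claim that refined Slater holds for (\ref{opt:MAC_equivalent_modified}) tacitly requires a feasible point with $0<f_{i,j}^d<r_{i,j}$ on \emph{every} link for \emph{every} destination, which can fail in degenerate topologies (e.g.\ a link that cannot carry positive flow toward some destination), whereas your argument needs only feasibility. The paper's proof, in exchange, is shorter and plugs directly into the cited results of Boyd--Vandenberghe and Feijer--Paganini. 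One small point to tighten in your write-up: when you say sublevel sets are bounded, state explicitly that you mean sublevel sets intersected with the feasible set, since boundedness of $\tilde R_j$ controls $r_{i,j}$ only because the constraints force $r_{i,j}\geq f_{i,j}^d\geq 0$.
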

\begin{proof}
Consider the following optimization problem
\begin{subequations}
\label{opt:MAC_equivalent_modified}
\begin{align}
\tag{\ref{opt:MAC_equivalent_modified}}
&\displaystyle\min_{{\bf f},{\bf r}}
{\displaystyle\sum_{j\in\mathcal{V}\backslash \{s\}}
{\lambda_j 2^{\tilde{R}_j}}}\\
&\qquad\rm{subject\;\;to}\nonumber\\
&\phi\left(-f_{i,j}^{d}\right)\leq 0\;\;\;\forall(i,j)\in\mathcal{E},d\in\mathcal{D}_s\\
&\phi\left(f_{i,j}^d-r_{i,j}\right)\leq 0\;\;\;\forall(i,j)\in\mathcal{E},d\in\mathcal{D}_s\\
&\displaystyle\sum_{i\in\mathcal{I}(j)}{f_{i,j}^{d}}-
\displaystyle\sum_{i\in\mathcal{O}(j)}{f_{j,i}^{d}}=\begin{cases}
												0 & j\notin \{s,d\}\\
												R_s & j=d\\
											\end{cases}\\
&\qquad\qquad\qquad\qquad\qquad\forall j\in\mathcal{V}\backslash \{s\},d\in\mathcal{D}_s.\nonumber
\end{align}
\end{subequations}
Note that the refined Slater's condition holds for (\ref{opt:MAC_equivalent_modified}) and therefore (\ref{opt:MAC_equivalent_modified}) has zero duality gap \cite{Boyd_Convex_2004}, but it does not hold for (\ref{opt:MAC_equivalent}). Obviously, the feasible sets of (\ref{opt:suboptimal_MAC_convex}), (\ref{opt:MAC_equivalent}) and (\ref{opt:MAC_equivalent_modified}) are all the same and therefore they have identical optimal solutions. We need to show that solving (\ref{opt:MAC_equivalent}) through its Lagrangian yields the same solution. Denote the primal variables by ${\bf x}=({\bf f},{\bf r})$ and the dual variables by $\boldsymbol{\zeta}$. Denote the Lagrangians of (\ref{opt:MAC_equivalent}) and (\ref{opt:MAC_equivalent_modified}) by $L({\bf x},\boldsymbol{\zeta})$, $L_M({\bf x},\boldsymbol{\zeta})$, respectively. The dual function of (\ref{opt:MAC_equivalent}) is given by $q(\boldsymbol{\zeta}) =
\displaystyle\min_{{\bf x}}{L({\bf x},\boldsymbol{\zeta})}$. Assume that there exists ${\bf \tilde{x}}(\boldsymbol{\zeta})=({\bf \tilde{f}},{\bf \tilde{r}})$ that is a minimizer of
$L({\bf x},\boldsymbol{\zeta})$ that does not obey the flow conservation constraint (\ref{con:flow_conservation_MAC_convex}). Therefore, there exists node $j\in\mathcal{V}\backslash \{s\}$ such that $q_j^d\neq 0$. Hence, we have either $\phi(q_j^d)>0$ or $\phi(-q_j^d)>0$. Without loss of generality assume that $\phi(q_j^d)>0$. In that case, since the cost function in (\ref{opt:MAC_equivalent}) is bounded below by $0$ we can always choose $\boldsymbol{\zeta}$ such that the dual solution $q(\boldsymbol{\zeta})$ is infinity (by setting all $\lambda_i$ to zero except the one with the positive coefficient $\phi(q_j^d)>0$). This contradicts the feasibility of the primal (\ref{opt:MAC_equivalent}). We conclude by noting that $L({\bf x},\boldsymbol{\zeta})=L_M({\bf x},\boldsymbol{\zeta})$ for any ${\bf x}$ obeys the flow conservation constraint (\ref{con:flow_conservation_MAC_convex}).
\end{proof}
\bibliographystyle{model1-num-names}

\end{document}